\newcommand{\ttt}{\texttt}
\newcommand{\llb}{\llbracket}
\newcommand{\rrb}{\rrbracket}
\newcommand{\ov}{\overline}
\newcommand{\botimes}[2]{\smash{\bigotimes^{{#1}}_{{#2}}} \vphantom{\bigotimes^{{#1}}_{{#1}}}}
\newcommand{\buplus}[2]{\smash{\biguplus^{{#1}}_{{#2}}} \vphantom {\biguplus^{{#1}}}}
\newcommand\celf{{Celf}}
\newcommand\clf{{\textsc{clf}}}
\newcommand\brs{\textsc{brs}}
\newtheorem{definition}{Definition}
\newtheorem{example}{Example}
\newtheorem{lemma}{Lemma}
\newtheorem{proof}{Proof}
\newtheorem{theorem}{Theorem}
\author{Maxime Beauquier and Carstern Sch\"{u}rmann
\institute{IT University of Copenhagen}\\
\email{[maxime.beauquier | carsten]@itu.dk}}
\date{}
\title{A Bigraph Relational Model
\thanks {\footnotesize This work was
   in part supported by NABITT grant 2106-07-0019 of the Danish Strategic Research
   Council.}}
\newif\iftechreport
\begin{document}

\iftechreport{
\includepdf[pages={1-2}]{tech_rep_frontpage.pdf}
}
\else
{\maketitle}
\fi

\begin{abstract}
  In this paper, we present a model based on relations for
  \emph{bigraphical reactive systems}~\cite{MILNER09}.  Its defining
  characteristics are that validity and reaction relations are
  captured as traces in a multi-set rewriting system.  The relational
  model is derived from Milner's graphical definition and directly
  amenable to implementation.

\end{abstract}

\section*{Introduction}

Milner's bigraphical reactive systems~\cite{MILNER09}, or
\textsc{brs}s in short, are formulated in terms of category theory.
They encompass earlier models, such as
\textsc{ccs}~\cite{Milner06PureBigraphs}, the
\(\pi\)-calculus~\cite{JensenMilner03BigraphsTransitions}, and Petri
nets~\cite{Milner03BigraphsPetriNets}.  However, as with other
categorical models, it is not immediately clear how to implement a
logical framework that could check, for example, the well-formedness
or correctness of a {\brs}, or just execute reaction rules.  On the
other hand there are mature implementations of logical frameworks,
e.g.\ \celf~\cite{Celf08} that already provide many of the algorithms
that one would need for such an implementation.  In particular, \celf
~provides support for linearity and concurrency using a kind of
structural congruence that arises naturally from the definition of
equivalence in the type theory \clf~\cite{CLF02}.  In this paper we
show that the two are deeply connected.  In particular, we formulate a
\emph{bigraph relational model} for \textsc{brs}s and demonstrate how
to piggy-bag on \celf's implementation by reusing algorithms, such as
unification, type checking, type inference, logic programming, and
multi-set rewriting.

A \textsc{brs} consists of a bigraph and a set of reaction rules.  The
bigraph consists of a \emph{place graph}, that usually models the
hierarchical (physical) structure of the concurrent system to be
modeled, and the \emph{link graph} that establishes the communication
structure between the different places.  By the virtue of this
definition alone, a bigraph does not have any dynamic properties.  It
is best understood as a snapshot of a concurrent system at a
particular point of time.

What makes a bigraph reactive is the accompanying set of reaction
rules. A reaction rule can be thought of as a rewrite rule, except
that the left and the right hand side are graphs rather than terms.
Consequently, matching the left hand side of a reaction rule with a
subgraph of the current bigraph is conceptually and computationally
not as straightforward as for example first-order unification.




As an alternative, we relate bigraphical reactive systems to something
that we understand well: unification modulo structural congruence in
the setting of {\clf}~\cite{SchackNielsen10lfmtp}.  {\clf} is a type
theory that conservatively extends the $\lambda$-calculus and serves
to model truly concurrent systems.  {\clf} follows the standard
judgements-as-types encoding paradigm, which means that derivations of
the validity of a {\brs}, traces of the operational semantics, etc.\
are encoded as {\clf} objects of the corresponding {\clf} type.  The
extensions include, for example, type families that are indexed by
objects, a dependent type constructor $\Pi x:A.\,B(x)$ as a
generalisation of the usual function type constructor, linear type
constructors, for example $A \multimap B$, that capture the nature of
resource consumption, and also a monadic type constructor $\{ A \}$.
This type is inhabited by objects of type $A$, such that two objects
are considered (structurally) congruent if and only if they differ
only in the order in which sub-terms are evaluated.


The \celf~\cite{Celf08} system is an implementation of {\clf} type
theory that provides a concept of logic variables that are logically
well understood in terms of linear contextual modal type theory and a
rich set of algorithms, including a sophisticated unification
algorithm, type inference, type checking, and a logical programming
language that supports both backward chaining proof search and forward
chaining multi-set rewriting.

The main contribution of this paper is a model for \textsc{brs} that
we call \emph{bigraph relational model} that follows closely the
graphical interpretation of the categorical model proposed by
Milner~\cite{MILNER09}.  Nodes, roots, sites, edges etc.\ are modelled
by multi-sets and the bigraph formation rules as rewrite rules.  A specific
bigraph is valid if and only if the corresponding multi-set can be rewritten to the empty set.
We also give a direct and elegant encoding of reaction rules as
rewrite rules. 
 Furthermore, we prove that {\celf}'s operational
interpretation of these rules is \emph{adequate} in the sense that it
coincides with the intended meaning of those rules.  As a consequence,
{\celf}'s multi-set rewriting engine implements the reactive behaviour
of a {\brs}.

For illustrative purposes, we choose Milner's bigraph encoding of
\textsc{ccs}~\cite{Milner06PureBigraphs} as running example, we sketch
the corresponding bigraph relational model.

\iftechreport{The source code is available from
  \ttt{\url{www.itu.dk/~beauquie/brs}}.}  \else{More details are
  available in the technical report~\cite{Beauquier10tr} and the
  source code is available from \ttt{\url{www.itu.dk/~beauquie/brs}}.}
\fi

The remainder of this paper is structured as follow: in
Section~\ref{section:bigraph} we reiterate the formal definition of
the bigraph structure. In Section~\ref{section:brs} we define reaction
rules and therefore \textsc{brs}. Then, we define the bigraph
relational model in Section~\ref{section:encoding_bigraph}, and show
that the structures properties of bigraphs are respected.  Furthermore
we define encoding and interpretation functions, and show that the
bigraph relational model is adequate.  We define the encoding of the
reaction rules and prove adequacy in
Section~\ref{section:encoding_brs}. Finally, we show the
implementation in {\celf} in Section~\ref{section:implementation} before we
conclude and assess results.

\section{Bigraphs}
\label{section:bigraph}

The definition used in this paper can be found in~\cite{MILNER09}.  A
bigraph consists of a set of nodes.  Each node is characterised by a
type, which we call \emph{control}.  Each \emph{control} is defined by
a number of \emph{ports}.  We write \(K\) for the set of
\emph{controls}, and \(arity:K \to \mathbb{N}\) a map from controls to
the number of ports.  Together they form what we call the
\emph{signature} \(\Sigma = (K,arity)\) of a bigraph.  The roots of
the place graph and the link graph, are called \emph{roots} and
\emph{outer names}, respectively.  They form the \emph{outer
  interface} of the bigraph.  The leafs of the place graph and the
link graph, are called \emph{sites} and \emph{inner names},
respectively.  They form the \emph{inner interface}.  A \emph{site}
should be thought of as a hole, which can be filled and properly
connected with another bigraph. 
More formally we define bigraph as:
  
  \begin{definition} [Bigraph] 
    \label{bigraphs_def} 
    A bigraph B under a signature \(\Sigma\) is defined as
    \[B = (V_B, E_B, P_B, ctrl_B, prnt_B, link_B):\langle m,X \rangle
    \to \langle n,Y \rangle, \] where \(m\) and \(n\) are finite
    ordinals that respectively refer to the \emph{sites} and the
    \emph{roots} of the bigraph structure. \(X\) (resp. \(Y,~ V_B,~
    E_B\)) refers to a finite set of \emph{inner names}
    (resp. \emph{outer names, nodes, edges}). \(P_B\) represents a set
    of \emph{ports}.  It is defined as
    \[P_B = \{(v,i)| i \in arity (ctrl_B~ v)\}\] 
    \(ctrl_B : V_B \to K\) assigns controls to nodes.  The \emph{place
      graph} establishes a tree shaped parent ordering among all nodes
    and is defined by \(prnt_B : V_B \cup m \to V_B \cup n\).  The
    relation \(link_B : X \cup P_B \to E_B \cup Y\) maps the union
    of inner names and ports to the union of edges and outer names, which
    represents the hyper-graph called the \emph{link graph}.

    In this definition, \(m, n, X, Y, V_B,\) and \(E_B\) are all
    assumed to be disjoint.  The functions \(arity\), \(ctrl_B\),
    \(prnt_B,\) and \(link_B\) are assumed to be total.

    \(\langle m,X \rangle \) and \( \langle n,Y \rangle\) are called
    the \emph{interfaces} of the bigraph, whereby the former is also
    referred to as the \emph{inner interface} and the latter the
    \emph{outer interface}.
  \end{definition}

  \begin{definition}[Ground Bigraph]
    A bigraph is \emph{ground}, if its inner interface is empty. The
    empty bigraph \(\langle 0,\emptyset \rangle\) is denoted by
    \(\epsilon\).
    
  \end{definition}

  As a running example we use \textsc{ccs} without replication (\(!\))
  and new (\(\nu\)) used to model a vending machine: \(\ov c.co \mid
  c.\ov {co} + c.\ov{t}.\) Here \(c\) represents a coin, \(co\) a cup
  of coffee, and \(t\) a cup of tea. The corresponding bigraph is
  presented in Figure~\ref{fig:example:coffee}, where
  \begin{figure}[t]
    \label{fig:example:coffee}
    \begin{center}
      \includegraphics[scale=0.6]{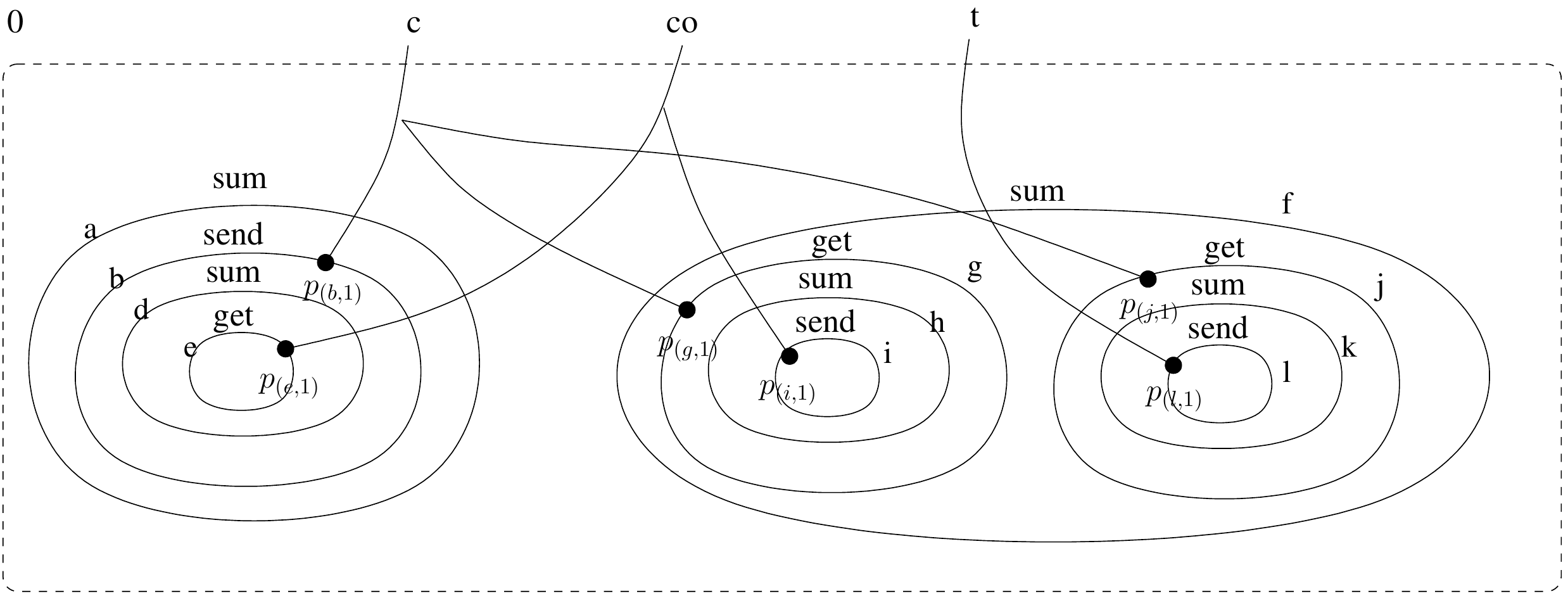} 
    \end{center}
    \caption{Vending  machine example.}
  \end{figure}
  \begin{align*}
    &\Sigma = (\{\ttt{get}, \ttt{send}, \ttt{sum}\},
    \{(\ttt{get},1), (\ttt{send}, 1), (\ttt{sum},0)\}), \\
    &V_B = \{a,b,d,e,f,g,h,i,j,l\}, \\
    & E_B = \emptyset, \\
    &P_B = \{p_{(b,1)}, p_{(e,1)}, p_{(g,1)}, p_{(i,1)}, p_{(j,1)}, p_{(l,1)}\}, \\
    &ctrl_B = \{(a,\ttt{sum}), (b,\ttt{send}), (d,\ttt{sum}), (e,\ttt{get}),(f,\ttt{sum}), (g,\ttt{get}),\\
    &\quad \quad \quad \quad  (h,\ttt{sum}), (i,\ttt{send}), (j,\ttt{get}),(k,\ttt{sum}), (l,\ttt{send})\},  \\
    &prnt_B = \{(a,0), (f,0), (b,a), (d,b), (e,d), (g,f), (h,g), (i,h), (j,f), (k,j), (l,k)\}, \\
    &link_B = \{(p_{(b,1)},c), (p_{(e,1)}, co), (p_{(g,1)}, c), (p_{(i,1)}, co), (p_{(j,1)}, c), (p_{(l,1)}, t), \\
    &m = 0, X = \emptyset, n = 1, \text{ and, } Y = \{c,co,t\}
  \end{align*}
Milner's encoding of \textsc{ccs} in bigraphs requires a so called
\emph{activity} relation for controls, which we omit from this presentation as it can be 
easily added.

  Bigraphs are closed by composition and juxtaposition, which are
  defined as follows:

  \begin{definition}[Composition]
    \label{def:composition}
    Let \(F:\langle k,X \rangle \to \langle m,Y\rangle\)
    and \(G:\langle m, Y \rangle \to \langle n,Z \rangle\) be two
    bigraphs under the same signature \(\Sigma\) with disjoint sets of
    nodes and edges.  The composition \(G \circ F\) is defined
    as:
    \[
    G \circ F = (V, E, P, ctrl, prnt, link) :\langle k,X \rangle \to
    \langle n,Z \rangle
    \]

    where: \(V = V_G \uplus V_F\), \(E = E_G \uplus E_F\), \(ctrl =
    ctrl_G \uplus ctrl_F\)
    \[
    prnt \; x = \left \{ 
    \begin{aligned}
      prnt_F \; x && \text{if \(x \in k \uplus V_F\) and \(prnt_F \; x \in V_F\)}\\
      prnt_G \; j && \text{if \(x \in k \uplus V_F\) and \(prnt_F \; x = j \in m\)}\\
      prnt_G \; x && \text{if \(x \in V_G\)}
    \end{aligned} \right.
    \]
    and 
    \[
    link \; x = \left \{
    \begin{aligned}
      link_F \; x && \text{if \(x \in X \uplus P_F\) and \(link_F \; x \in E_F\)}\\
      link_G \; y && \text{if \(x \in X \uplus P_F\) and \(link_F \; x = y \in Y\)}\\
      link_G \; x && \text{if \(x \in P_G\)}
    \end{aligned} \right.
    \]
  \end{definition}

  Intuitively, \(prnt_{(G \circ F)}\) is defined as the union of
  \(prnt_G\) and \(prnt_F\) where each root \(r\) of \(F\) and each
  site \(s\) of \(G\) such that \(s = r\) satisfies the following: If
  \(prnt_G \; s = y\) and \(prnt_F \; x = r\) then \(prnt_{(G \circ
    F)} \; x = y\) for all $x, y$.  The definition of \(link_{(G \circ
    F)}\) is defined analogously.
  
  \begin{definition}
    [Juxtaposition] Let \(F = (V_F, E_F, P_F, ctrl_F, prnt_F,
    link_F):\langle k,X \rangle \to \langle m,Y\rangle\) and \\ \(G =
    (V_G, E_G, P_G, ctrl_G, prnt_G, link_G):\langle l, W \rangle \to
    \langle n,Z \rangle\) be two bigraphs under the same
    signature \(\Sigma\) that as above have disjoint nodes and edges. The
    \emph{juxtaposed} bigraph \(G \otimes F\) is defined as follows.
      \begin{align*}
        G \otimes F = & (V_F \uplus V_G, E_F \uplus E_G, P_F \uplus P_G,
          ctrl_F \uplus ctrl_G, prnt_F \uplus prnt_G', \\ 
        & \quad link_F \uplus link_G):\langle k + l,X \uplus W \rangle \to
          \langle m + n, X \uplus Z \rangle,
      \end{align*}
    where \(prnt_G' \; (k + i) = m + j\) whenever \(prnt_G \; i = j\).
  \end{definition}

  The composition of two bigraphs can be seen as plugging one bigraph
  structure into the other. The juxtaposition on the other hand can be
  seen as putting two disjoint bigraphs next to each other.

\section{Bigraphical Reactive System}
\label{section:brs}
 
A \textsc{brs} consists of a ground bigraph, which is also called the
\emph{agent} and a set of reaction rules.  In this paper, we discuss
two kinds of reaction rules, those that disallow the matching of sites
(which are said to be \emph{ground}) and those that do not (which are said
to be \emph{parametric}).

\begin{definition}[Ground Reaction Rule]
  A ground rewriting rule consists of two ground bigraphs, the redex
  and the reactum, with the same interfaces \((L: \epsilon \to J,\; R:
  \epsilon \to J)\).
\end{definition}

When we \emph{apply} a ground reaction rule to an agent \(B\), we
require that it can be decomposed into \(B \equiv C \circ L\).  The
result of the application is an agent \(B' \equiv C \circ R\), which
is justified because \(L\) and \(R\) have the same interface.

Parametric reaction rules differ from ground reaction rules by
allowing $L$ and $R$ to contain sites, which may move from one to
another node, be copied, or simply deleted. To this effect parametric
reaction rules define a relation $\eta$ between sites of the reactum
and sites of the redex.  Note the particular direction of
\(\eta\) and the link graph inner interfaces of the redex and the
reactum are empty.

\begin{definition}[Parametric Reaction Rule] \label{def:prr} A
  parametric reaction rule is a triple of two bigraphs and a total
  function from \(R\)'s sites to \(L\)'s sites \(\eta\)

  \[
      (L: \langle m, X \rangle \to J,\; R: \langle m', X \rangle \to J, \; \eta : m' \to m)
  \]
  \end{definition}

  The semantics of parametric \textsc{brs} is also based on
  decomposition of the agent, where an instantiation function is
  deduced from \(\eta\) to compute the new tails of the bigraph
  structure. This function is defined up to an equivalence. The
  \(\bumpeq\)-equivalence is defined as follows:
  
  \begin{definition} [Equivalence]
    Let \(B\) and \(G\) be two bigraphs with the same interface
    \(\langle k,X \rangle \to \langle m, Y\rangle \).  \(B\) and \(G\)
    are called \(\bumpeq\)-equivalent (\emph{lean} equivalent) if
    there exist two bijections \(\rho_V:V_B \to V_G\) and
    \(\rho_E:E_B \to E_G\) that respect the structure, in the
    following sense:
    \begin{itemize}
    \item \(\rho\) preserve controls, i.e. \(ctrl_G \circ \rho_V =
      ctrl_B\), and therefore induces a bijection \(\rho_P:P_F \to P_G\)
      defined as \(\rho_P \; (v,i) = ((\rho_V \; v), i)\).\\
    \item \(\rho\) commutes with the structural maps as follow: 
      \[\begin{array}{l}
      prnt_G \circ (Id_m \uplus \rho_V) = (Id_n \uplus \rho_V) \circ
      prnt_B\\ 
      link_G \circ (Id_X \uplus \rho_P) = (Id_Y \uplus \rho_E) \circ
      link_B.
      \end{array}
      \]
    \end{itemize}
  \end{definition}

  \begin{definition} [Instantiation]
    \label{def:ov_eta}
    Let \(F \equiv G \circ (d_0 \otimes \dots \otimes d_{m-1}):\langle
    n,Y\rangle \to \langle m, X\rangle\) be a bigraph where \(G\) is a
    link graph, the \(d\)'s have no inner names and a unary outer face
    and \(\eta:m' \to m\) the relation on sites from
    Definition~\ref{def:prr}.  We refer to an instantiation of $\eta$
    on \(F\) as \(\ov \eta\) that is defined as follows: \[
      \ov \eta \; F = G \circ (d_0' \Vert \dots \Vert d_{m'-1}')
    \] where \(\forall j \in m', d_j' \bumpeq d_{(\eta \;
    j)}\). Following \cite{MILNER09} we write $d \Vert d'$ for $d
  \otimes d'$ where $d$ and $d'$ can share edges and outer names.
  \end{definition}
  
  Let \((B, \mathcal{R})\) be a {\brs}, and \((L, R,\eta) \in
  \mathcal{R}\) the parameterised reaction rule.  If \(B \equiv C \circ
  (D \otimes L) \circ C'\) then we can apply the reaction rule
  $\mathcal{R}$ and rewrite the bigraph \(B\) into \(B' \equiv C
  \circ (D \otimes R) \circ (\ov \eta \; C')\).
  \begin{figure}[t]
    \label{fig:ccs:transition_rule}
    \begin{center}
      \includegraphics[scale=0.55]{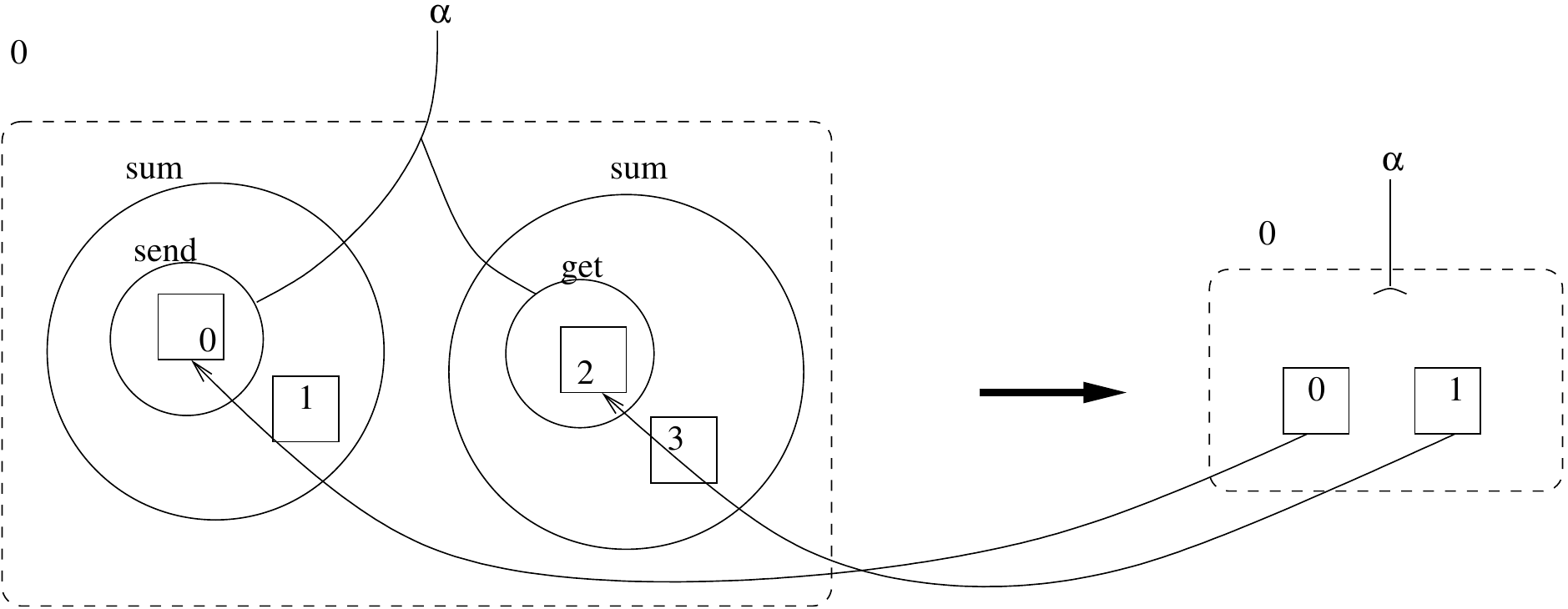}
    \end{center}
    \caption{The bigraph encoding for the \textsc{ccs} \(\tau\)-transition rule.}
  \end{figure}
  \begin{figure}[t]
    \begin{center}
      \includegraphics[scale=0.55]{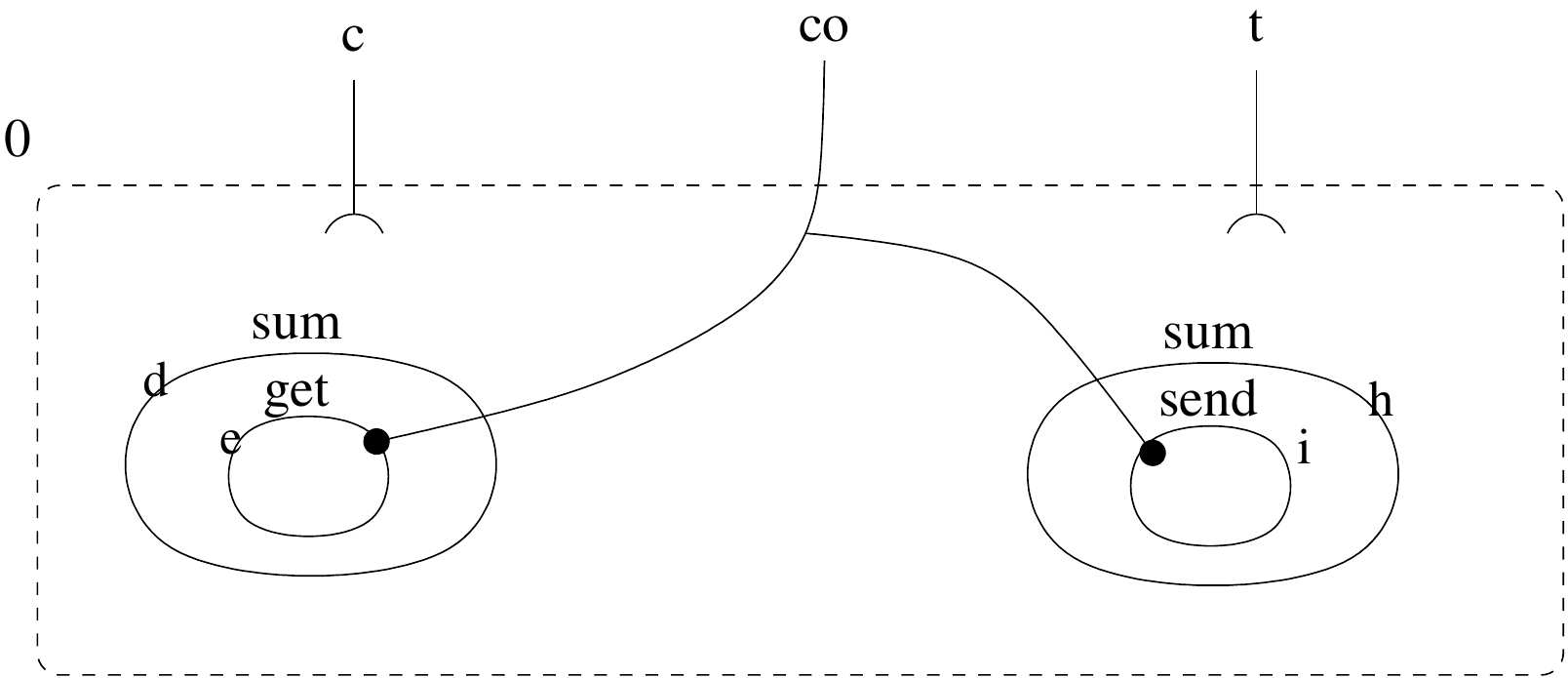}
    \end{center}
    \caption{The \textsc{ccs} term \(co \mid \ov {co}\).}
    \label{fig:ccs:step}
  \end{figure}
  For instance, the \(\tau\)-reaction rule the Milner's encoding of
  \textsc{ccs} in bigraph, \( (\alpha.P + P') \mid (\ov \alpha.Q + Q')
  \rightarrow^\tau Q \mid P\) is encoded as parametric reaction rule
  presented in Figure~\ref{fig:ccs:transition_rule}, in this case
  \(\eta=\{(0,0), (1,2)\}\). And using this rule, the coffee machin
  example could reduce in one step in the bigraph represented in
  Firgure~\ref{fig:ccs:step}.

\section{Bigraph Relational Model }

\label{section:encoding_bigraph}
Next, we tackle the definition of the bigraph relational model that
arises form the graphical presentation of the categorical model.  In
our model, bigraphs are identified by name.  The disjoint sets of
names \(n\), \(V_B\), \(m\), \(P_B\), \(Y\), \(X\), \(E_B\), \(K\) are
expressed as \emph{base} relation symbols \ttt{bigraph}, \ttt{root},
\ttt{node}, \ttt{site}, \ttt{port}, \ttt{o\_name}, \ttt{i\_name},
\ttt{e\_name}, \ttt{control}.  The \(arity\) function is encoded as
\ttt{arity}, a ternary relation symbol indexed by \ttt{control}, a
natural number, and a \ttt{bigraph}.  And similarly, relations
\(prnt_B\), \(link_B\), \(ctrl_B\), and \(P_B\) are encoded by the 
following \emph{operational} relation symbols:
\begin{itemize}
\item \ttt{prnt S D B} where \ttt{S} $\in$ \ttt{node} $\cup$ 
  \ttt{site} and \ttt{D} $\in$ \ttt{node} $\cup$ \ttt{root},
\item \ttt{link S D B} where \ttt{S} $\in$ \ttt{i\_name} $\cup$
  \ttt{port} and \ttt{D} $\in$ \ttt{o\_name} $\cup$
  \ttt{e\_name},
\item \ttt{lc A K B} where \ttt{A} $\in$ \ttt{node} and \ttt{K} $\in$
  \ttt{control},
\item \ttt{lp P A B} where \ttt{P} $\in$ \ttt{port} and \ttt{A} $\in$
  \ttt{node},
\end{itemize}
where the symbol \ttt B represents the name of the bigraph \(B\)
  and where disjunctive types are encoded as relations:
\begin{itemize}
\item \ttt{prnt\_src}, with \ttt{src\_n} for nodes and
  \ttt{src\_s} for sites as constructors,
\item \ttt{prnt\_dst}, with \ttt{dst\_n} for nodes and \ttt{dst\_r}
  for roots,
\item \ttt{link\_src}, with \ttt{src\_i} for inner names and
  \ttt{src\_p} for ports,
\item \ttt{link\_dst}, with \ttt{dst\_o} for outer names and
  \ttt{dst\_e} for edges.
\end{itemize}
We declare the following \emph{structural} relation
symbols as well:
\begin{itemize}
\item \ttt{is\_root R B} where \ttt{R} $\in$ \ttt{root} and \ttt{R} belongs to \ttt{B},
\item \ttt{is\_node N B} where \ttt{N} $\in$ \ttt{node} and \ttt{N} belongs to \ttt{B},
\item \ttt{is\_site S B} where \ttt{S} $\in$ \ttt{site} and \ttt{S} belongs to \ttt{B},
\item \ttt{is\_port P B} where \ttt{P} $\in$ \ttt{port} and \ttt{P} belongs to \ttt{B},
\item \ttt{is\_o\_name O B} where \ttt{O} $\in$ \ttt{o\_name} and \ttt{O} belongs to \ttt{B},
\item \ttt{is\_i\_name I B} where \ttt{I} $\in$ \ttt{i\_name} and \ttt{I} belongs to \ttt{B},
\item \ttt{is\_e\_name E B} where \ttt{E} $\in$ \ttt{e\_name} and \ttt{E} belongs to \ttt{B}.
\end{itemize}


\setcounter{equation}{0}
\begin{figure}[t]
\label{fig:setrwr:valid}
\begin{align*}
    & \text{base cases}\\
    &\tag{dr}\quad\{\ttt{is\_root R, has\_child\_p (dst\_r R) z}\} 
    \uplus \Delta \mapsto \Delta\\
    &\tag{do}\quad\{\ttt{is\_o\_name O, has\_child\_l (dst\_o O) z}\} \uplus \Delta  
    \mapsto \Delta\\
    &\tag{de}\quad\{\ttt{is\_e\_name E, has\_child\_l (dst\_e E) z}\} \uplus \Delta, 
    \mapsto \Delta\\
    & \text{recursive cases}\\
    & \tag{lgpsz}\quad\{\ttt{is\_port P, lp P A, vp A (s z), 
      link (src\_p P) D,}\\
    & \qquad \ttt{has\_child\_l D (s N)}\} \uplus \Delta \mapsto \{\ttt{has\_child\_l D N}\} \uplus \Delta\\
    & \tag{lgi}\quad\{\ttt{is\_i\_name I, link (src\_i I) D, has\_child\_l D (s N)}\}\\
    & \quad \quad \uplus \Delta \mapsto \{\ttt{has\_child\_l D N}\} \uplus \Delta\\
    & \tag{lgs}\quad\{\ttt{is\_site S, prnt (src\_s S) D, 
      has\_child\_p D (s N)}\} \uplus \Delta\\
    & \quad \quad \mapsto \{\ttt{has\_child\_p D N}\} \uplus \Delta\\
    & \tag{pgnz}\quad\{\ttt{is\_node A, has\_child\_p (dst\_n A) z, prnt (src\_n A) D,} \\
    &  \quad \quad \ttt{has\_child\_p D (s N), lc A K}\} \uplus \Delta
    \mapsto \{\ttt{has\_child\_p D N}\} \uplus \Delta & \text{if} \; arity \; 
    \ttt K = 0\\
    &  \tag{pgns}\quad\{\ttt{is\_node A, has\_child\_p (dst\_n A) z, prnt (src\_n A) D,}\\
    &  \quad \quad \ttt{has\_child\_p D (s N), lc A K}\} \uplus \Delta\\
    &  \qquad \quad \mapsto \{\ttt{has\_child\_p D N, vp A N'}\} \uplus \Delta & 
    \text{if} \; arity \; \ttt K = N' > 0\\
    & \tag{lgps}\quad\{\ttt{is\_port P, lp P A, 
      vp A (s (s N')),} \\
    &  \quad \quad \ttt{link (src\_p P) D, has\_child\_l D (s N)}\} \uplus \Delta\\
    & \quad \quad \mapsto \{\ttt{vp A (s N'), has\_child\_l D N}\} \uplus \Delta
  \end{align*}
\caption{Bigraph Validity}
\label{fig:validity}
\end{figure}

Next, we sketch an algorithm for deciding the validity of a bigraph in
the relational model.  The algorithm is deceptively simple: using the
rules depicted in Figure~\ref{fig:validity}, we rewrite the encoding
of a bigraph to the empty set by checking the validity of the place
graph and the link graph and the valid use of control's arity.  As we
will show below, the algorithm is confluent and strongly normalising.
The rules are partially sequentialised in such a way that children are
rewritten before the parents and nodes before ports. We make this
information explicit and define three more operational relational
symbols one for the place graph, an other one for the link graph, and
one for the control's arity.
\begin{itemize}
\item \ttt{has\_child\_p D N B} where \ttt{D} $\in$ \ttt{node} $\cup$
  \ttt{root}, \ttt{N} a natural number, and \ttt B a bigraph name.
\item \ttt{has\_child\_l D N B} where \ttt{D} $\in$ \ttt{e\_name}
  $\cup$ \ttt{o\_name}, \ttt{N} a natural number, and \ttt B a bigraph
  name.
\item \ttt{vp A N B} where \ttt{A} \(\in\) \ttt{node}, \ttt{N} a
  natural number, and \ttt{B} a bigraph name.
\end{itemize}

\begin{figure}[t]
{\small \input{encoding}}
\caption{Encoding function from a bigraph structure.}
\label{fig:encoding}
\end{figure}

The encoding of bigraph B is now straightforward. It is defined as the
multi-set \(S = \llb B \rrb\) in
Figure~\ref{fig:encoding}.

For reasons of convenience, we omit bigraph names from relations, and
we use uppercase characters for logic variables in rewrite
rules. Furthermore we use lower case \ttt{z} and \ttt{s} for zero and
successor.

We show that the rewriting system is strongly normalising and
implements a decision procedure for checking the validity of bigraphs.

  \begin{lemma} [SN]
    \label{lm:setrwr:normalising} 
    This multi-set rewriting system is strongly normalising for any finite multi-set \(S\).
  \end{lemma}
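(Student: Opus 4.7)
The plan is to exhibit a strictly decreasing, well-founded measure on finite multi-sets and observe that every rewrite rule decreases it, so that no infinite rewrite sequence can exist.

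First, I would choose the measure $\mu(S) = |S|$, the cardinality of the multi-set $S$ counted with multiplicity. Since $\mu(S) \in \mathbb{N}$ and $(\mathbb{N}, <)$ is well-founded, it suffices to show that every rewrite step $S \mapsto S'$ satisfies $\mu(S') < \mu(S)$.

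Next I would tabulate the seven rules of Figure~\ref{fig:validity}, computing for each one the net change in cardinality by counting the atoms on the two sides of the arrow (the context $\Delta$ cancels). The three base cases (dr), (do), (de) consume two atoms and produce none. The linking rules (lgi) and (lgs) consume three atoms and produce one. Rule (lgpsz) consumes five atoms and produces one. Rule (pgnz) consumes five and produces one. The two counter-decrementing rules (pgns) and (lgps) consume five and produce two. In every case the net change is at most $-1$, so a single application of any rule strictly decreases $\mu$. The side conditions on $arity \; \texttt{K}$ in (pgnz) and (pgns) do not affect the cardinality argument; they only determine which of the two rules is applicable.

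From this uniform strict decrease and the well-foundedness of $\mathbb{N}$, strong normalisation is immediate: any infinite chain $S_0 \mapsto S_1 \mapsto \dots$ would give an infinite descending chain $\mu(S_0) > \mu(S_1) > \dots$ in $\mathbb{N}$, which is impossible. I do not anticipate a real obstacle here, since the proof reduces to inspecting a finite table of rules; the only care needed is to count atoms rather than, say, to interpret the natural-number arguments inside facts like \texttt{vp A (s N')}, whose internal structure is irrelevant for the cardinality measure.
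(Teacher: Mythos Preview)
Your proof is correct and follows the same idea as the paper's one-line argument (``by a trivial induction on the size of the set $S$''): both use the multi-set cardinality as a strictly decreasing well-founded measure. Your only slip is calling them ``the seven rules'' when Figure~\ref{fig:validity} actually lists nine, but your subsequent enumeration and per-rule cardinality counts are all accurate.
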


  \begin{proof}
    By a trivial induction of the size of the set \(S\).
  \end{proof}
  In the following we write \(S \longrightarrow S'\) for transitive
  closure of $\mapsto$ from Figure~\ref{fig:validity}.  We say $S$ is
  \emph{valid} if and only if $S \longrightarrow \emptyset$, $S$
  contains only unique elements and operational symbols
  \ttt{has\_child\_p}, \ttt{has\_child\_l} and \ttt{vp} are unique on
  their first argument, respectively \(\ttt{node} \cup \ttt{root}\),
  \(\ttt{o\_name} \cup \ttt{e\_name}\) and \ttt{port}.

  \begin{example}The bigraph $B$ from
    Figure~\ref{fig:example:coffee}, is represented as follows:
    {\small
    \begin{align*}
      \llb B \rrb &= \{\ttt{is\_root 0 B}, \;\ttt{has\_child\_p 0 (s (s z)) B}\\
      & \ttt{is\_node a B}, \;\ttt{lc a get B}, \;\ttt{has\_child\_p a (s z) B}, \\
      & \ttt{is\_node b B}, \;\ttt{lc b send B}, \;\ttt{has\_child\_p b (sz) B}, \\
      & \ttt{is\_node d B}, \;\ttt{lc d sum B}, \;\ttt{has\_child\_p d (s z) B}, \\
      & \ttt{is\_node e B}, \;\ttt{lc e get B}, \;\ttt{has\_chidl\_p e z B}, \\
      & \ttt{is\_node f B}, \;\ttt{lc f sum B}, \;\ttt{has\_child\_p f (s (s z)) B}, \\
      & \ttt{is\_node g B}, \;\ttt{lc g get B}, \;\ttt{has\_child\_p g (s z)}, \\
      & \ttt{is\_node h B}, \;\ttt{lc h sum B}, \;\ttt{has\_child\_p h (s z) B}, \\
      & \ttt{is\_node i B}, \;\ttt{lc i send B}, \;\ttt{has\_child\_i g z}, \\
      & \ttt{is\_node j B}, \;\ttt{lc j get B}, \;\ttt{has\_child\_p d (s z) B}, \\
      & \ttt{is\_node k B}, \;\ttt{lc k sum B}, \;\ttt{has\_chidl\_p k z B}, \\
      & \ttt{is\_node l B}, \;\ttt{lc l send B}, \;\ttt{has\_child\_p l z B}, \\
      & \ttt{is\_port \(p_{(b,1)}\) B}, \;\ttt{lp \(p_{(b,1)}\) b B}, \;
        \ttt{is\_port \(p_{(e,'1)}\) B}, \;\ttt{lp \(p_{(e,1)}\) e B},\\
      & \ttt{is\_port \(p_{(g,1)}\) B}, \;\ttt{lp \(p_{(g,1)}\) g B}, \;
        \ttt{is\_port \(p_{(i,1)}\) B}, \;\ttt{lp \(p_{(i,1)}\) i B},\\
      & \ttt{is\_port \(p_{(j,1)}\) B}, \;\ttt{lp \(p_{(j,1)}\) j B},\;
        \ttt{is\_port \(p_{(l,1)}\) B}, \;\ttt{lp \(p_{(l,1)}\) l B},\\
      & \ttt{is\_o\_name c B}, \;\ttt{has\_child\_l c (s (s (s z))) B},\\
      & \ttt{is\_o\_name co B}, \;\ttt{has\_child\_l co (s (s z)) B},\\
      & \ttt{is\_o\_name t B}, \;\ttt{has\_child\_l t (s z) B},\\
      & \ttt{prnt a 0 B}, \;\ttt{prnt f 0 B}, \;\ttt{prnt b a B},\;
        \ttt{prnt d b B}, \;\ttt{prnt e d B}, \;\ttt{prnt g f B}, \\
      & \ttt{prnt h g B}, \;\ttt{prnt i h B}, \;\ttt{prnt j f B},
        \ttt{prnt k j B}, \;\ttt{prnt l k B},\\
      & \ttt{link \(p_{(b,1)}\) c B}, \;\ttt{link \(p_{(e,1)}\) c B}, \;\ttt{link \(p_{(g,1)}\) c B},\\ 
      & \ttt{link \(p_{(i,1)}\) co B}, \;\ttt{link \(p_{(j,1)}\) co B}, \;\ttt{link \(p_{(l,1)}\) t B}\}
    \end{align*}}
  \end{example}

  \begin{lemma}
    \label{lm:setrwr:functional}
    Let $S$ be valid. Then the relations in $S$ defined by the operational
    symbols are total, acyclic, and single valued.
  \end{lemma}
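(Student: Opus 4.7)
The plan is to prove the three properties---totality, single-valuedness, and acyclicity---by analyzing, for each operational symbol, which rules in Figure~\ref{fig:validity} consume it, and then arguing from the assumption $S \longrightarrow \emptyset$. The guiding observation is that every structural atom (\ttt{is\_node}, \ttt{is\_port}, \ttt{is\_site}, \ttt{is\_i\_name}, \ttt{is\_o\_name}, \ttt{is\_e\_name}, \ttt{is\_root}) appears on the left-hand side of exactly one family of rules, and by uniqueness of elements each such atom triggers exactly one rule application. That application is the unique opportunity to consume the operational facts attached to the same entity.

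First I would prove single-valuedness. Consider \ttt{prnt} on a node $A$: the only rules that consume \ttt{prnt (src\_n A) D} are (pgnz) and (pgns), both of which also consume \ttt{is\_node A}. Since \ttt{is\_node A} is unique and is removed exactly once, only one \ttt{prnt}-fact for $A$ can ever be consumed; a second such fact would persist forever and prevent $S \longrightarrow \emptyset$. The same pattern gives single-valuedness of \ttt{prnt} on sites (via (lgs)), of \ttt{link} on ports and inner names (via (lgpsz), (lgps), and (lgi)), and of \ttt{lc} and \ttt{lp} (consumed together with their node and port respectively). Totality is the dual argument: if a node lacked a \ttt{prnt} or \ttt{lc} fact, neither (pgnz) nor (pgns) could fire on it and \ttt{is\_node A} would survive, contradicting validity; the analogous contradiction covers missing \ttt{link}, \ttt{lp}, and site-\ttt{prnt} facts.

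For acyclicity, the crucial observation is that rules (pgnz) and (pgns) both require \ttt{has\_child\_p (dst\_n A) z} as a precondition for consuming \ttt{is\_node A}. Since \ttt{has\_child\_p} is unique on its first argument, each decrement from $\ttt{s}\,N$ to $N$ corresponds to consuming exactly one child of $A$. Therefore a node can leave $S$ only after every one of its \ttt{prnt}-children has already left, inducing a strict bottom-up well-founded ordering on \ttt{prnt}. No such ordering exists on a relation containing a cycle, so \ttt{prnt} must be acyclic. The analogous claim for \ttt{link} holds trivially because its source sort (ports and inner names) is disjoint from its target sort (edges and outer names), forbidding cycles outright.

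The main obstacle will be making the counting argument for acyclicity fully rigorous. One must verify that the initial value of each counter \ttt{has\_child\_p D N} supplied by the encoding in Figure~\ref{fig:encoding} is precisely $|\{x \mid prnt_B\,x = D\}|$, and then inspect each rewrite rule to confirm that a decrement of such a counter occurs exactly when a child of $D$ is consumed and never otherwise. A small auxiliary observation, immediate from validity, is that uniqueness of \ttt{has\_child\_p} (and symmetrically \ttt{has\_child\_l} and \ttt{vp}) on its first argument is preserved along the reduction, so that the counters remain well-defined throughout.
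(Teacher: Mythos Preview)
Your proposal is correct and follows essentially the same approach as the paper: totality and single-valuedness are obtained from the fact that each operational fact is consumed only together with the unique structural fact on the same entity, and acyclicity of \ttt{prnt} is argued via the \ttt{has\_child\_p} counters (the paper phrases it as a direct contradiction on a putative cycle, you phrase it as a well-founded ordering induced by ``children must leave before parents''---these are the same observation). Your remark that \ttt{link}, \ttt{lc}, \ttt{lp} are acyclic by sort disjointness also matches the paper's ``by typing'' one-liner.

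One small correction to your closing paragraph: the lemma is stated for an \emph{arbitrary} valid $S$, not for $S = \llb B\rrb$, so you should not appeal to the encoding of Figure~\ref{fig:encoding} to justify the initial counter value. The fact that \ttt{has\_child\_p\;D\;N} agrees with the number of remaining \ttt{prnt}-children of~$D$ is exactly the content of Lemma~\ref{lm:setrwr:has_child}, and it is derived purely from validity, not from the encoding. Your acyclicity argument does not actually need the exact count anyway---it only needs that once \ttt{has\_child\_p (dst\_n A)} has been consumed (at zero), any remaining child of~$A$ is permanently stuck, contradicting $S\longrightarrow\emptyset$; this follows directly from the shape of rules (pgnz)/(pgns)/(lgs).
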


  \iftechreport{
  \begin{proof}
    Every operational symbols are consumed by the rewriting rules with
    their structural symbols associated, which also defined their
    domain. Therefore operational relations are total and since every
    structural element is unique they are also single valued
    relations.

    By typing, \ttt{link}, \ttt{lc} and \ttt{lp} are acyclic but \ttt{prnt}.



    Proof by contradiction, suppose that \ttt{prnt} represent an
    acyclic relation, then there exists \(x_0, \dots, x_{(n-1)} \in
    \ttt{node}\) such that \(\forall j<n, \ttt{prnt} \; x_j \;
    x_{(j+1)} \in S\) and \(\ttt{prnt} \; x_{(n-1)} \; x_0\).  Only
    rules \label{setrwr:node_z} and \label{setrwr:node_sn} are able to
    consume a \ttt{prnt} symbol, therefore \(\ttt{prnt} \; x_i \;
    x_{(i+1)}\) and \(\ttt{prnt} \; x_{(n-1)} \; x_0\) are consumed
    with the \ttt{has\_child\_p} symbol for \(x_i\) and \(x_{(n-1)}\),
    and the \ttt{has\_child\_p} symbol for \(x_{(i+1)}\) and \(x_0\)
    are decremented. In particular, \(\ttt{prnt} \; x_0 \; x_1\) is
    consumed with \(\ttt{has\_child\_p} \; x_0 \; \ttt z\) and
    \(\ttt{has\_child\_p} \; x_{(n-1)} \; \ttt s n\) is decremented
    and \(\ttt{prnt} \; x_{(n-1)} \; x_0\) is consumed with
    \(\ttt{has\_child\_p} \; x_{(n-1)} \; x_0\) and
    \(\ttt{has\_child\_p} \; x_{(n-1)} \; \ttt s n'\) is
    decremented. Therefore, the \ttt{has\_child\_p} symbols of \(x_0\)
    and \(x_{(n-1)}\) have both to be decremented and consume, which
    leads to a contradiction with the hypothesis \(S\) is valid:
    either \(S \not \longrightarrow \emptyset\) or \(S\) do not have
    at most one \ttt{has\_child\_p} symbol for each different node.
  \end{proof}}\fi

  \begin{lemma}
    \label{lm:setrwr:has_child}
    Let $S$ be valid.
      \begin{itemize}
      \item \(\forall y \in\) \ttt{\upshape node} \(\cup\) \ttt{\upshape
        root}, \(\forall N\) a natural number, \ttt{\upshape
        has\_child\_p y N} \(\in S\) implies \(| \{x \mid
        \ttt{\upshape prnt}\;x\; y \in S\}| = N\)\\
      \item \(\forall y \in\) \ttt{\upshape edge} \(\cup\) \ttt{\upshape
        o\_name}, \(\forall N\) a natural number, \ttt{\upshape
        has\_child\_l y N} \(\in S\) implies \(|\{x \mid
        \ttt{\upshape link} \; x \; y \in S\} | = N\)\\
      \item \(\forall v \in\) \ttt{\upshape node}, \(\forall N\) a natural number,
        \ttt{\upshape vp v N} \(\in S\) implies \(|\{x \mid \ttt{\upshape lp x v}
        \in S\}| = N\).
      \end{itemize}
  \end{lemma}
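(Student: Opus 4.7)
The plan is to strengthen the claim into an invariant that holds of \emph{every} multi-set along a rewriting derivation, and then instantiate it at the starting state. Since $S$ is valid we have a derivation $S = S_0 \mapsto S_1 \mapsto \cdots \mapsto S_k = \emptyset$, finite by Lemma~\ref{lm:setrwr:normalising}. I would prove by reverse induction on $i$ that each $S_i$ satisfies the three counting identities of the lemma; the lemma then follows by taking $i = 0$. The base case $S_k = \emptyset$ is vacuous.

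For the inductive step, assume the invariant in $S_{i+1}$ and analyse the rule firing in $S_i \mapsto S_{i+1}$. Every rule of Figure~\ref{fig:validity} does one or more of the following: (i) decrement a counter \ttt{has\_child\_p y}, \ttt{has\_child\_l y}, or \ttt{vp y} from \ttt{(s N)} to \ttt{N} while simultaneously consuming exactly one matching edge symbol (\ttt{prnt}, \ttt{link}, or \ttt{lp}) with target $y$; (ii) delete a counter at its residual value (value \ttt{z} for rules (dr), (do), (de), (pgnz), (pgns); value \ttt{(s z)} for (lgpsz), which additionally consumes one last matching edge); (iii) create a fresh \ttt{vp A N'} (rule (pgns)). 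For case (i) the inductive hypothesis supplies equality of the counter value and the edge count in $S_{i+1}$, and incrementing both sides by one on the passage back to $S_i$ preserves the identity; uniqueness of counter symbols on their first argument, which is part of validity and preserved by every rule, ensures no other counting identity is disturbed. For case (iii) the inductive hypothesis in $S_{i+1}$ equates the number of \ttt{lp p A} with $N'$, and since (pgns) touches no \ttt{lp} symbol the same count holds in $S_i$.

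The main obstacle is case (ii): when a counter on $y$ is removed one must justify that no edge symbol with target $y$ lingers in $S_i$ beyond what is consumed by the firing rule itself. Suppose for contradiction some \ttt{prnt x y}, \ttt{link x y}, or \ttt{lp p A} remains in $S_i$. It would persist into $S_{i+1}$ since the only rules that consume edge symbols are those of case (i), each of which demands a counter with matching target. But no rule of Figure~\ref{fig:validity} ever introduces a counter with a first argument that was previously deleted (counters are only produced by decrements of existing counters, or once by (pgns) for a fresh target), so the stranded edge can never be consumed, contradicting $S_{i+1} \longrightarrow \emptyset$. Hence the number of edges targeting $y$ present in $S_i$ is exactly the residual counter value (\ttt{z} or \ttt{(s z)}), and the invariant holds in $S_i$. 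Once this contradiction argument is in place, the rest of the case analysis is routine bookkeeping read directly off the shape of each rule.
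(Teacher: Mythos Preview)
Your reverse induction along $S_0 \mapsto \cdots \mapsto \emptyset$, maintaining the three identities as a joint invariant, differs from the paper's argument, which fixes a target $y$ and does induction on the counter value $n$: for $n=0$ a leftover edge to $y$ contradicts $S\longrightarrow\emptyset$, and for $n=n'+1$ one exhibits a successor state with the counter decremented and one edge consumed, then invokes the hypothesis on $n'$. Your route is more uniform, treating all three counter families at once and making the per-rule preservation explicit; the paper's is shorter but handles each family separately.

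There is, however, a genuine gap in your case~(ii) reasoning for \ttt{vp}. You assert that no rule reintroduces a counter with a previously deleted first argument, since counters arise only from decrements ``or once by (pgns) for a fresh target.'' But validity does not forbid $S_0$ from already containing some \ttt{vp A M}: if it does, (lgpsz) may fire and delete that \ttt{vp A} while \ttt{is\_node A} is still present, and (pgns) can subsequently recreate \ttt{vp A}. Your stranded-\ttt{lp} contradiction then collapses. Concretely, take $S_0$ containing \ttt{vp A (s z)}, \ttt{is\_node A}, \ttt{lc A K} with $\mathit{arity}\;\ttt{K}=1$, two ports with \ttt{lp P A} and \ttt{lp P' A} (each linked to its own edge with matching \ttt{has\_child\_l}), and the obvious place-graph scaffolding for $A$ under a single root: one checks $S_0\longrightarrow\emptyset$ by firing (lgpsz) on $P$, then (pgns) on $A$, then (lgpsz) on $P'$, yet $|\{x\mid \ttt{lp}\;x\;\ttt{A}\in S_0\}|=2\neq 1$, so the third bullet is in fact false as stated. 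The intended reading is that $S$ arises as an encoding $\llb B\rrb$ per Figure~\ref{fig:encoding}, which never emits \ttt{vp}; under that extra hypothesis your freshness claim becomes correct---\ttt{vp A} can then enter only via (pgns), which consumes \ttt{is\_node A} and so fires at most once per $A$---and the rest of your argument goes through. The paper's own sketch for the \ttt{vp} case does not address this point either.
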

  
  \iftechreport{
    \begin{proof}
      \begin{itemize}
        \item \ttt{has\_child} symbol: let \(n\), the natural number
          in argument of one of these symbols. Let
          \(x\) be something involve in a \ttt{has\_child} symbol, by induction on \(n\).
          \begin{itemize}
          \item \(n=0\), if there is an other \ttt{prnt} where \(x\)
            is in the parent position, then it will broke the
            hypothesis.
          \item \(n = n' +1\), then there exists a set \(S'\) such
            that \(S \longrightarrow S' \longrightarrow \emptyset\)
            where a \ttt{prnt} symbol with \(x\) as parent is consumed
            and the \ttt{has\_child} symbol of \(x\) is decremented
            from \(n\) to \(n'\). Since the rewriting rules does not
            break the unicity and \(S' \longrightarrow \emptyset\), by
            the induction hypothesis, \(n'\) is the number of
            \ttt{prnt} symbols that involve \(x\) as a parent in
            \(S'\).
          \end{itemize}
        \item \ttt{vp} symbol: let \(v\) a \ttt{node}, \(p\) a
          \ttt{port} and \(n\) a natural number such that \(\ttt{vp v
            n} \in S\). Note that, by \(S \longrightarrow \emptyset\),
          \(n<0\). By the validity of \(S\), \(S \longrightarrow S'
          \longrightarrow \emptyset\), where \(\ttt{vp p (s n')} \in
          S\) and \(\ttt s \; n' = n\), if the Rule~(pgnz) is
          used, \(n = 1\) and there is no more \ttt{vp} symbols
          related to \(p\). Otherwise, the Rule~(pgns) is
          used, \(n < 1\), \(\ttt{vp p n'} \in S'\) and the induction
          hypothesis can be applied on \(S'\).
      \end{itemize}
    \end{proof}
  }\else{
    The proofs of Lemma~\ref{lm:setrwr:functional}
    and~\ref{lm:setrwr:has_child} rely on the definition of multi-set
    rewriting consuming elements of the multi-set.}\fi

  \begin{lemma}[Normal Form]
    \label{lm:setrwr:confluent}
    Let $S$ be valid.  Then $\emptyset$ is the unique normal form of
    $S$ with respect to  $\longrightarrow$.
  \end{lemma}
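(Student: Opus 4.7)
The plan is to combine strong normalisation from Lemma~\ref{lm:setrwr:normalising} with local confluence, apply Newman's lemma to conclude confluence, and then use validity to identify the unique normal form as $\emptyset$.

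First, recall that validity of $S$ already guarantees the existence of \emph{some} reduction sequence $S \longrightarrow \emptyset$, and $\emptyset$ is trivially in normal form because no rule from Figure~\ref{fig:validity} has an empty left-hand side. The only remaining task is to show that no other normal form is reachable from $S$, i.e.\ that $\longrightarrow$ is confluent on $S$. By Lemma~\ref{lm:setrwr:normalising} the system is strongly normalising, so by Newman's lemma it is enough to establish local confluence: whenever $S \mapsto S_1$ and $S \mapsto S_2$ via two rule instances $r_1$ and $r_2$, there is a common reduct $S'$ with $S_1 \longrightarrow S'$ and $S_2 \longrightarrow S'$.

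Next I would carry out the local confluence argument by a case analysis on the multi-set of atoms shared by the left-hand sides of $r_1$ and $r_2$. If the two instances consume disjoint multi-sets of atoms, they commute immediately: applying the other rule after the first one leaves the same residue $S' = (S \setminus \mathrm{LHS}(r_1)) \setminus \mathrm{LHS}(r_2) \uplus \mathrm{RHS}(r_1) \uplus \mathrm{RHS}(r_2)$ regardless of order. If they share atoms, I would exploit the fact that each rule consumes exactly one structural marker (\ttt{is\_root}, \ttt{is\_node}, \ttt{is\_site}, \ttt{is\_port}, \ttt{is\_o\_name}, \ttt{is\_i\_name}, or \ttt{is\_e\_name}) and that these markers, together with the associated \ttt{lc}, \ttt{lp}, \ttt{prnt}, \ttt{link} atoms, are uniquely keyed by their first argument in a valid $S$ (Lemma~\ref{lm:setrwr:functional}). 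This forces $r_1$ and $r_2$ to share atoms only among the counter atoms \ttt{has\_child\_p~D~(s~N)}, \ttt{has\_child\_l~D~(s~N)}, and \ttt{vp~A~(s~N)}. Since all rules that touch such a counter do so by decrementing it (and (pgns) moreover produces a fresh \ttt{vp} keyed by a distinct node), the two actions still commute: after firing $r_1$ the counter is replaced by its predecessor, $r_2$ remains applicable with its counter argument shifted accordingly, and the final multi-set is independent of the order.

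The main obstacle is the bookkeeping for the counter atoms, in particular checking that when two port rules (lgpsz), (lgps) act on the same \ttt{vp~A~(s~(s~N'))}, or when two node rules (pgnz), (pgns) act on sibling nodes sharing \ttt{has\_child\_p~D~(s~N)}, the second firing still matches the correct side condition after the first. This is immediate because decrementing a successor is a commutative operation on the multi-set (the shape \ttt{s~\dots} is preserved as long as the counter has not yet reached \ttt{z}), and because the side conditions on arity and on \ttt{vp} depend only on atoms that are not disturbed by the other rule. Once local confluence is in hand, Newman's lemma yields confluence; combined with strong normalisation, $S$ has a unique normal form, and since $S \longrightarrow \emptyset$ by the definition of validity, that normal form must be $\emptyset$.
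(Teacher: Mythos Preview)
Your proof is correct and follows exactly the paper's route: strong normalisation (Lemma~\ref{lm:setrwr:normalising}) plus local confluence, then Newman's lemma, then identify the unique normal form with $\emptyset$ via the validity assumption. Your critical-pair analysis---disjoint instances commute trivially, overlapping instances share only a counter atom (\ttt{has\_child\_p}, \ttt{has\_child\_l}, or \ttt{vp}) whose decrement commutes---is precisely what the paper sketches when it says the nontrivial pairs ``share a \ttt{has\_child} symbol'' and are joinable in one step by firing the other rule against the decremented counter. One cosmetic point: the uniqueness of structural markers and of the counter atoms on their first argument is part of the \emph{definition} of validity rather than a consequence of Lemma~\ref{lm:setrwr:functional}, so you could cite the definition directly there.
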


  \begin{proof}
    Since strong normalisation has already been proven in
    Lemma~\ref{lm:setrwr:normalising}, we prove local confluence in
    order to apply the Newman's lemma.  
    \iftechreport{ There are 81
      critical pairs that respect the hypothesis, some are trivial,
      and show that rules are truly commutative, the
      Rules~(dr) and~(do) for
      instance. Some others are almost commutative, basically, for
      parent relation symbols, such as \ttt{prnt} or \ttt{link}, or
      when a \ttt{port} structural symbol is consumed, they share
      respectively a \ttt{has\_child} symbol. It
      is not truly commutative because the rule does not use the same
      instance of these additional symbols, but the critical pair can
      be join by applying the other rule with the new additional
      symbol.  }\else{ The critical pairs that respect the hypothesis
      are either trivial or join-able in one step.  }\fi
  \end{proof}

  By induction over the rewrite trace, we can easily convince
  ourselves that \({\mathit{valid}} \; S\) holds if and only if $S$ encodes
  a bigraph.

  \begin{theorem} [Inversion]
    \label{th:setrwr:bigraph}
    If $S$ is valid then there exists a bigraph B, s.t.\ \(\llb B \rrb =
    S\).
  \end{theorem}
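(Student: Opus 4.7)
The plan is to read the bigraph $B$ directly off the symbols in $S$ and then verify that re-encoding $B$ yields $S$. The underlying sets are defined from the structural symbols: $V_B = \{a \mid \ttt{is\_node } a \in S\}$, $E_B = \{b \mid \ttt{is\_e\_name } b \in S\}$, and analogously $X$ and $Y$; $m$ and $n$ are obtained from the cardinalities of the \ttt{is\_site} and \ttt{is\_root} symbols (fixing any enumeration). Uniqueness of multiset elements (part of validity) together with the kind-carrying structural tags makes these sets pairwise disjoint as required by Definition~\ref{bigraphs_def}.

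The maps $ctrl_B$, $prnt_B$, and $link_B$ are then read off the operational symbols \ttt{lc}, \ttt{prnt}, and \ttt{link}, using the \ttt{src\_*}/\ttt{dst\_*} constructors to land in the appropriate disjoint sums. Lemma~\ref{lm:setrwr:functional} supplies exactly what is needed: these relations are total, single-valued, and (for $prnt_B$) acyclic, so they are legitimate bigraph components.

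The nontrivial step concerns the port set, which Definition~\ref{bigraphs_def} forces to be $P_B = \{(v,i) \mid i < arity(ctrl_B\;v)\}$ rather than being freely chosen from the \ttt{is\_port} symbols of $S$. I would use the rewrite system: \ttt{is\_node } $a$ can only be consumed by rule (pgnz), valid when $arity(ctrl_B\;a) = 0$, or by (pgns), which introduces \ttt{vp } $a\ arity(ctrl_B\;a)$; the counter is then drained by (lgps) and (lgpsz), each consuming one \ttt{lp} symbol for $a$. Lemma~\ref{lm:setrwr:has_child} applied to \ttt{vp} yields $|\{p \mid \ttt{lp } p\ a \in S\}| = arity(ctrl_B\;a)$, so enumerating these ports as $(a,0), \dots, (a, arity(ctrl_B\;a) - 1)$ both determines $P_B$ and fixes the first projection $\pi_1$ used in the encoding. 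I expect this to be the main obstacle, because it reconciles a dynamic counting argument about the rewrite trace with the static arity-based definition of $P_B$.

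Finally, I would verify $\llb B \rrb = S$ by comparing the encoding in Figure~\ref{fig:encoding} with $S$ component by component. The \ttt{is\_*}, \ttt{lc}, \ttt{prnt}, \ttt{link}, and \ttt{lp} entries agree by construction, while the \ttt{has\_child\_p} and \ttt{has\_child\_l} counts on $V_B \cup n$ and $E_B \cup Y$ agree by the first two clauses of Lemma~\ref{lm:setrwr:has_child}. Taking the multiset union gives exactly $S$, so $B$ witnesses the existential.
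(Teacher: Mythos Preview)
Your proposal is correct and follows the same approach as the paper, which also reduces the argument to Lemmas~\ref{lm:setrwr:functional} and~\ref{lm:setrwr:has_child}. Your treatment is considerably more detailed than the paper's one-line proof; in particular, your explicit handling of the port--arity correspondence via the \ttt{vp} counter and the component-by-component verification that $\llb B \rrb = S$ fill in steps the paper leaves implicit (and later packages as the interpretation function $\llb \cdot \rrb^{\star}$).
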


  \begin{proof}
    Lemma~\ref{lm:setrwr:functional} and~\ref{lm:setrwr:has_child}
    ensure that the set holds the properties of the graphical
    definition of a bigraph.
  \end{proof}

  \begin{figure}[t]
    \input{interpretation}
    \caption{Interpretation function.}
    \label{fig:interpretation}
  \end{figure}

  This theorem guarantees that the encoding as defined in
  Figure~\ref{fig:encoding} has an inverse (which is only defined on
  valid sets).  It is
  defined in Figure~\ref{fig:interpretation} and for which we write
  \(\llb S \rrb^\star\).  Furthermore, we have shown that there exists a
  bijection between bigraphs and their representations as valid
  multi-sets.

  \begin{theorem} [Adequacy]
    \label{th:adequacy}
    Let B be a bigraph under a signature \(\Sigma=(K, arity)\), then
    \(\big\llb \llb B \rrb \big\rrb^{\star} \bumpeq B\).
  \end{theorem}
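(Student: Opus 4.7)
The plan is to reduce the theorem to two separate verifications. Since $\llb S \rrb^\star$ is defined only on valid multi-sets, I would first establish that $\llb B \rrb$ is valid, i.e.\ $\llb B \rrb \longrightarrow \emptyset$, and then check that reading off the components of $\llb\llb B \rrb\rrb^\star$ via Figure~\ref{fig:interpretation} yields a bigraph whose components coincide (identically, in fact) with those of $B$, so that the identity on $V_B$ and $E_B$ serves as the required bijection $\rho$.

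For validity, I would exhibit a concrete rewriting strategy from $\llb B \rrb$ to $\emptyset$. Because $prnt_B$ encodes a forest, I would schedule the place-graph rules bottom-up: at each stage pick a node $v$ whose $\ttt{has\_child\_p}$ counter currently stands at $\ttt{z}$; if $arity(ctrl_B\;v)=0$, consume $v$ by rule (pgnz), decrementing its parent's counter; otherwise apply (pgns), which produces a $\ttt{vp}$ token that is then discharged by $arity(ctrl_B\;v)$ applications of (lgpsz) followed possibly by (lgps), each consuming one of the ports of $v$ together with a decrement of the relevant $\ttt{has\_child\_l}$ counter. Inner names and sites are folded in with (lgi) and (lgs) at any time their link/parent target is still available. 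When no nodes remain, each root's counter has been driven to $\ttt{z}$ and rule (dr) eliminates it, while outer names and edges disappear via (do) and (de). The invariant that every currently present structural symbol either participates in an enabled rule or still has a pending descendant to process is maintained by this post-order scheduling, so the trace reaches $\emptyset$. Normalisation (Lemma~\ref{lm:setrwr:normalising}) and confluence (Lemma~\ref{lm:setrwr:confluent}) are not strictly needed here, but they guarantee that the particular trace I choose is representative.

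Once $\llb B \rrb$ is known valid, the components of $\llb\llb B \rrb\rrb^\star$ can be read directly from Figure~\ref{fig:interpretation}. By the construction in Figure~\ref{fig:encoding}, the facts $\ttt{is\_node}\;a_i$, $\ttt{is\_e\_name}\;b_i$, $\ttt{is\_port}\;c_i$, $\ttt{is\_root}\;f_i$, $\ttt{is\_site}\;d_i$, $\ttt{is\_o\_name}\;g_i$, $\ttt{is\_i\_name}\;e_i$ place exactly the elements of $V_B$, $E_B$, $P_B$, $n$, $m$, $Y$, $X$ into the interpretation, and the operational facts $\ttt{prnt}$, $\ttt{link}$, $\ttt{lc}$, $\ttt{lp}$ record precisely the graphs of $prnt_B$, $link_B$, $ctrl_B$, and the port-to-node projection $\pi_1$. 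Taking $\rho_V=\mathrm{Id}_{V_B}$ and $\rho_E=\mathrm{Id}_{E_B}$, the induced $\rho_P$ is also the identity on $P_B$, and the two commutation equations for $prnt$ and $link$ reduce to reflexivity. Hence $\llb\llb B \rrb\rrb^\star\bumpeq B$.

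The main obstacle is the validity step. The rules for nodes with positive arity are intentionally sequentialised (``children before parents, nodes before ports''), so one must argue that the scheduling above is always enabled and never gets stuck: in particular the $\ttt{vp}\;A\;(\ttt{s}(\ttt{s}\;N'))$ pattern in (lgps) forces us to process the arity-many ports of a node in a strict order driven by the $\ttt{vp}$ counter, and the guard $\ttt{has\_child\_p}\;(\ttt{dst\_n}\;A)\;\ttt{z}$ forces us to wait until $A$ is childless before consuming it. Once the bottom-up scheduling is justified by acyclicity of the encoded place graph together with totality of $link_B$, the remainder of the argument is just matching the encoding and interpretation definitions term by term.
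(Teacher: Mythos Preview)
Your argument is correct and, in its second half, coincides with the paper's: the paper also proceeds by cases on the constituents of $B$ (roots, nodes with their controls and parents, outer names, edges, sites, inner names, arity), checking that each survives the round trip $\llb\cdot\rrb$ followed by $\llb\cdot\rrb^\star$, which amounts to taking $\rho_V$ and $\rho_E$ to be identities. Your first half---the explicit bottom-up scheduling argument that $\llb B\rrb\longrightarrow\emptyset$---is more than the paper provides: the paper simply invokes Theorem~\ref{th:setrwr:bigraph} (Inversion) and proceeds to the element-by-element check without separately arguing that $\llb B\rrb$ is valid, so your treatment fills a step the paper leaves implicit.
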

  
  \iftechreport{
    \begin{proof}
      With theorem~\ref{th:setrwr:bigraph}, we provide by cases on
      element of \(B\):
      \begin{itemize}
      \item \(\forall r \in m_B\), \ttt{root r} \(\in \llb B \rrb\)
        there are only \(r \in m_{\big \llb \llb B \rrb \big\rrb}\).
      \item \(\forall n \in V_B, k \in K, ctrl_B~ n = k, \forall d \in
        V_B \uplus m, prnt~ n = d\), \ttt{node n}, \ttt{lc n k} and
        \ttt{prnt n d} \(\in \llb B \rrb\), there are only \(n \in
        V_{\big \llb \llb B \rrb \big\rrb}, ctrl_{\big \llb \llb B
          \rrb \big\rrb}~ n = k\) and \(prnt_{\big \llb \llb B \rrb
          \big\rrb}~ n = d\).
      \item \(\forall o \in Y_B\), \ttt{is\_o\_name o} \(\in \llb B
        \rrb\) there are only \(o \in Y_{\big \llb \llb B \rrb \big\rrb}\)
      \item \(\forall e \in E_B\), \ttt{is\_e\_name e} \(\in \llb B
        \rrb\) there are only  \(e \in E_{\big \llb \llb B \rrb \big\rrb}\)
      \item \(\forall s \in n_B\), \ttt{is\_site s} \(\in \llb B
        \rrb\) there are only \(s \in n_{\big \llb \llb B \rrb
          \big\rrb}\)
      \item \(\forall i \in X_B\), \ttt{is\_i\_name i} \(\in \llb B
        \rrb\) there are only \(i \in X_{\big \llb \llb B \rrb
          \big\rrb}\)
      \item \(\forall k \in K_B, \forall n \in \mathbb{N}, arity_B~ k = n\),
        \ttt{arity k n} \(\in \llb B \rrb\), then \(arity_{\big \llb
          \llb B \rrb \big\rrb}~ k = n\).
      \end{itemize}
    \end{proof}
    }\fi
  
  Next, we show that composition and juxtaposition of bigraphs
  are provided ``for free'' in the bigraph relational model.  They
  basically correspond to multi-set union.

  Let $C$ be a bigraph.  Next, we partition \(\llb C \rrb\) into three parts,
  \(\widetilde{C} \uplus out_C \uplus in_C\) where  
  \begin{enumerate}
  \item \(out_C\) contains only references to roots \ttt{is\_root},
    outer names \ttt{is\_o\_name}, place graph parent relations
    \ttt{prnt} on roots and link graph parent relations \ttt{link} on
    outer names,
  \item \(in_C\) contains only references to sites \ttt{i\_site},
    inner names \ttt{is\_i\_name}, place graph parent relations
    \ttt{prnt} on sites and link graph parent relations \ttt{link} on
    inner names,
  \item and \(\widetilde{C}= \llb C \rrb \setminus out_C \setminus in_C \).
  \end{enumerate}

  Also, let $B$ be a bigraph such that \(B = C \circ C'\) for two bigraphs,
  \(C\) and \(C'\).  We define the set \(eq_{CC'}\) as follows :
    \begin{align*}
    eq_{CC'} =& \{ \ttt{prnt x y} \mid \ttt{is\_root r} \in \llb C' \rrb 
       \wedge \ttt{is\_site s} \in \llb C \rrb\\
    & \quad \wedge \mid \ttt s \mid = \mid \ttt r \mid {}\wedge{} \ttt{prnt x r} 
       \in \llb C' \rrb \wedge \ttt{prnt s y} \in \llb C \rrb\}\\
    & \cup \{ \ttt{link x y} \mid \ttt{is\_i\_name i} \in \llb C \rrb \wedge 
       \ttt{is\_o\_name o} \in \llb C' \rrb\\
    & \quad \wedge \mid \ttt i \mid = \mid \ttt o \mid {}\wedge{} \ttt{link i y} \in \llb C 
       \rrb \wedge \ttt{link x o} \in \llb C' \rrb\}
    \end{align*}
  Note, that \(eq_{CC'}\) is built from \(out_{C'}\) and \(in_C\).


  \begin{lemma}
    \label{lm:brs_encoding:composition}
    Let $B$ be a bigraph.  If \(B \equiv C \circ C'\) then \(\llb B
    \rrb = out_c \; \cup \; \widetilde{C} \; \cup \; eq_{CC'} \;
    \cup \; \widetilde{C'} \; \cup \; in_{C'}\).
  \end{lemma}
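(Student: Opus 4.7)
The plan is to prove the multi-set equality by decomposing $\llb B \rrb$ along the structure given by Definition~\ref{def:composition}. Since $B = C \circ C'$, we have $V_B = V_C \uplus V_{C'}$, $E_B = E_C \uplus E_{C'}$, $P_B = P_C \uplus P_{C'}$, with the outer interface of $B$ inherited from $C$ and the inner interface from $C'$. I would unfold the encoding function of Figure~\ref{fig:encoding} applied to $B$ and classify each emitted symbol by which bigraph feature of $B$ it witnesses and by whether that feature originates from $C$ or $C'$, showing that it lands in exactly one of the five components on the right-hand side.

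The interior contributions are routine. For any $v \in V_C$, the first and third clauses of the prnt case-split in Definition~\ref{def:composition} agree with $prnt_C$ on $V_C$, so the symbols \ttt{is\_node}, \ttt{lc}, \ttt{prnt}, \ttt{has\_child\_p} for $v$ are produced identically in $\llb C \rrb$ and $\llb B \rrb$ and hence lie in $\widetilde{C}$. An analogous argument for nodes of $C'$ (whose parents remain inside $V_{C'}$), together with the symmetric treatment of ports and edges via the link clauses, places the interior symbols of $C'$ in $\widetilde{C'}$. The outer interface of $B$ coincides with that of $C$, so the is\_root, is\_o\_name, has\_child\_p for roots, and link arrows into outer names form $out_C$; symmetrically, the inner interface of $B$ coincides with that of $C'$ and supplies $in_{C'}$.

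The crucial step is to identify $eq_{CC'}$ as exactly the set of prnt/link arrows that cross the shared interface. By the second clause of the prnt case-split in Definition~\ref{def:composition}, whenever $x \in V_{C'} \uplus \text{sites}(C')$ satisfies $prnt_{C'}(x) = r$ for a root $r$ of $C'$, we have $prnt_B(x) = prnt_C(s)$ where $s$ is the site of $C$ identified with $r$ (matched by $|s| = |r|$). The symbol \ttt{prnt $x\;(prnt_C(s))$} emitted in $\llb B \rrb$ is then, by construction, an element of $eq_{CC'}$; a symmetric argument handles link arrows crossing the inner-name/outer-name interface. The main obstacle is the bookkeeping: one must verify that the partitions of $\llb C \rrb$ and $\llb C' \rrb$ are arranged so that arrows terminating at or originating from the shared interface are absorbed into $in_C$ and $out_{C'}$ respectively (and thus excluded from the right-hand side), so that no symbol is duplicated or omitted, and one must reconcile the multiplicities carried by \ttt{has\_child\_p} in $\widetilde{C}$, $out_C$, and $\widetilde{C'}$ with the $prnt_B$-preimage counts under the site–root identification. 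Once this accounting is in place, the claimed equality of multi-sets follows by simply collecting, for each class of feature of $B$, the unique component on the right-hand side that provides the corresponding symbols.
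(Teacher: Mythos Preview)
Your proposal is correct and follows essentially the same route as the paper: unfold Definition~\ref{def:composition}, use disjointness of $C$ and $C'$ to split nodes, edges, ports and controls, identify the outer interface with $out_C$ and the inner with $in_{C'}$, and recognise $eq_{CC'}$ as precisely the \ttt{prnt}/\ttt{link} arrows that cross the shared interface via the second clause of each case-split. Your write-up is in fact more careful than the paper's sketch, and the bookkeeping obstacle you flag about the \ttt{has\_child\_p} multiplicities under the site--root identification is a genuine detail that the paper's own proof simply glosses over.
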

  \iftechreport{
    \begin{proof}
      Following the Definition~\ref{def:composition} of composition,
      the outer interface of \(B\) is the one of \(C\), here
      \(out_c\).the inner interface is the one of \(C'\), here
      \(in_{C'}\). \(V_B = V_C \cup V_{C'}\) with \(V_C \cap V_{C'} =
      \emptyset\), here it is the union of nodes of \(\llb C \rrb\)
      and \(\llb C' \rrb\). Same things for \(E_B\) and
      \(ctrl_B\). The \(prnt\) and \(link\) maps, that does not
      involve sites in \(C\) and roots in \(C'\) are in
      \(\widetilde{C'}\) and \( \widetilde{C}\). And the
      actual composition is defined in \(eq_{CC'}\) by \(f_B \, x =
      f_C \, j\) where \(f\) is a short cut for \(prnt\) or \(link\),
      \(f_C' \, x = i\) and \(i = j\).
    \end{proof}
  }\fi

  \begin{lemma}
    \label{lm:brs_encoding:juxtaposition}
    Let $B$ be a bigraph. If \(B \equiv C \otimes C'\) then \(\llb B
    \rrb = \; \widetilde{C} \; \cup \; \widetilde{C'} \; \cup \;
    out_C \; \cup \; in_{C} \;\cup \; out_{C'} \; \cup \; in_{C'} =
    \llb C \rrb \; \cup \; \llb C' \rrb\) .
 \end{lemma}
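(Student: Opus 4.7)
The plan is to prove the two equalities by unfolding definitions and observing that juxtaposition induces a disjoint union at every level of the encoding. I will prove the second equality $\widetilde{C} \cup \widetilde{C'} \cup out_C \cup in_C \cup out_{C'} \cup in_{C'} = \llb C \rrb \cup \llb C' \rrb$ first, since it is essentially free: by construction of the partition introduced just before Lemma~\ref{lm:brs_encoding:composition}, we have $\llb C \rrb = \widetilde{C} \uplus out_C \uplus in_C$ and $\llb C' \rrb = \widetilde{C'} \uplus out_{C'} \uplus in_{C'}$, and taking the union of these two identities gives exactly the right hand side.

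The substantive content lies in showing $\llb B \rrb = \llb C \rrb \cup \llb C' \rrb$. I would proceed by inspecting each component of the encoding in Figure~\ref{fig:encoding} in turn. From the definition of juxtaposition, the carrier sets decompose as $V_B = V_C \uplus V_{C'}$, $E_B = E_C \uplus E_{C'}$, $P_B = P_C \uplus P_{C'}$, and the interfaces decompose as $m_B = m_C + m_{C'}$, $n_B = n_C + n_{C'}$, $X_B = X_C \uplus X_{C'}$, $Y_B = Y_C \uplus Y_{C'}$. The maps $ctrl_B$, $link_B$ are the literal disjoint unions $ctrl_C \uplus ctrl_{C'}$, $link_C \uplus link_{C'}$, and the parent map $prnt_B = prnt_C \uplus prnt_{C'}'$ differs from $prnt_{C'}$ only by a bijective re-indexing $k+i \mapsto m+j$ of roots and sites. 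Hence each of the seven $\biguplus$-blocks in Figure~\ref{fig:encoding} splits into the corresponding block for $C$ and the corresponding block for $C'$; the re-indexing is absorbed because bigraphs in the relational model are identified by fresh names rather than by ordinal positions, which is precisely the content of working modulo $\bumpeq$.

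Collecting the seven block-wise identities yields $\llb B \rrb = \llb C \rrb \cup \llb C' \rrb$. Since the sets of nodes, edges, ports, sites, inner names, outer names, and roots of $C$ and $C'$ are assumed disjoint, this union is actually a disjoint union, so no structural symbols collide and no operational symbols are duplicated on the same argument; in particular the result remains valid in the sense of Figure~\ref{fig:validity}.

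The only step that requires care is the re-indexing of roots and sites in the definition of $\otimes$, which syntactically alters the encoded names in the $\widetilde{C'}$, $out_{C'}$, and $in_{C'}$ fragments. I expect this to be the main (minor) obstacle: it must be discharged by appealing to the naming convention that $C$ and $C'$ are chosen with already-disjoint root and site identifiers, so that the re-indexing $prnt_{C'} \mapsto prnt_{C'}'$ can be realised by a trivial bijection on names, under which $\llb \cdot \rrb$ is invariant up to $\bumpeq$.
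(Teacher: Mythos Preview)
Your proposal is correct and follows essentially the same approach as the paper, which simply states that the definition of juxtaposition (together with the disjointness of $C$ and $C'$) directly implies the statement. Your treatment is in fact more detailed than the paper's own proof, in particular your explicit handling of the root/site re-indexing in $prnt_{C'}'$, which the paper leaves implicit.
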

  \iftechreport{
    \begin{proof}
      The definition of the juxtaposition directly implies this
      statement.
    \end{proof}
  }\else{ 
     
    The proofs of these lemmas are direct, using the definitions of
    composition and juxtaposition, in particular, the disjointness
    property of \(C\) and \(C'\).  }\fi
  
\section{Modelling Reaction Rules}
\label{section:encoding_brs}
 
In this section, we illustrate how we model reaction rules.

    
    
  \subsection{Ground Reaction Rule}

  Recall that applying a ground reaction rule
  \((L, R)\) to an agent $B$ proceeds by decomposing \(B\) into \(C
  \circ L\) for some bigraph \(C\) and then replacing \(L\) by \( R\).
  Therefore, in the model, we only need to partition the agent \(\llb
  B \rrb\) into two sets with respect to \(C\), one that is affected
  by the reaction rule (here \(\widetilde{L} \; \cup \; eq_{CL}\)) and
  the other that is not.
  \[
    \llb C \circ L \rrb = out_C \; \cup \; \widetilde{C} \; \cup
      \; eq_{CL} \; \cup \; \widetilde{L} \mapsto
    \llb C \circ R \rrb = out_C \; \cup \widetilde{C} \; \cup
      \; eq_{CR} \; \cup \; \widetilde{R}
  \]

  For a given bigraph \(C\), we can think of a ground reaction
  rule as a multi-set rewrite rule that replaces among other things the
  set $eq_{CL}$ by $eq_{CR}$:
  \[
   \widetilde{L} \; \cup \; eq_{CL} \mapsto
      \widetilde{R} \; \cup \; eq_{CR}
  \]

  We can rid this rule of the dependency on $C$.  We know, first, that
  the inner interface of $C$ must be the same as the outer interface
  of $L$ (and therefore also $R$).  Second, the components in the
  interface that depend on $C$ are only roots and outer names.
  Therefore, instead of quantifying over $C$, we can reformulate the
  rule by simply quantifying over the aforementioned components. 

  \subsection{Parametric Reaction Rule}
  \label{section:encoding_parametric_brs}


  The parametric reaction rule from Definition~\ref{def:prr} is a
  triple that consists of two bigraphs, and a function \(\eta\) that
  maps sites from the reactum to sites in the redex.

  A parametric reaction rule is applied to an agent $B$ if $B$ can be
  decomposed into \(C \circ (D \otimes L) \circ C'\) where \(C, D\)
  and \(C'\) are bigraphs and \(L\) is the redex. The result of the
  application is the agent \(C \circ (D \otimes R) \circ (\ov \eta \;
  C')\) where \(\ov \eta\) is defined in Definition~\ref{def:ov_eta}.
  The basic idea is essentially the same as in the ground case,
  therefore we proceed analogously and model decomposition by partition

 \begin{align*}
    \llb C \circ (D \otimes L) \circ C' \rrb &= out_C \; \cup \;
    \widetilde{C} \; \cup \; eq_{C((D \otimes L) \circ C')} \; \cup \;
    \widetilde{((D \otimes L)\circ C')} \; \cup \; eq_{LC'} \\
    & \qquad \; \cup \; \widetilde{C'} \; \cup \; in_{((D \otimes L)\circ C')}\\
      &= out_C \; \cup \; \widetilde{C} \; \cup \; eq_{C((D \otimes L) \circ
        C')} \; \cup \; \widetilde{L} \; \cup \; \widetilde{D} \; \cup \;
      eq_{DC'} \; \cup \; eq_{LC'} \; \cup \;  \widetilde{C'}
  \end{align*}
  and model  parametric reaction rule as a multi-set
  rewriting rule. 

  \begin{align*}
    &eq_{C((D \otimes L) \circ C')} \; \cup \; \widetilde{L} \; \cup \;
    \widetilde{D} \; \cup \;
    eq_{DC'} \; \cup \; eq_{LC'} \; \cup \;  \widetilde{C'}\\
    &\quad \mapsto eq_{C((D \otimes R)\circ (\ov \eta \; C'))} \; \cup \;
    \widetilde{R} \; \cup \; \widetilde{D} \; \cup \; eq_{D(\ov \eta \;
      C')} \; \cup \; eq_{L(\ov \eta \; C')} \; \cup \; \widetilde{( \ov
      \eta \; C')}
  \end{align*}
 
  Differently from above, the formulation of the rule is not only
  dependent on $C$, but also on $C'$.  This time, however things are
  not as direct, in part because the inner interfaces of the redex and
  the reactum do not match.  In Definition~\ref{def:prr}, we use
  \(\eta\) to coerce one to the other, which means that the interfaces
  between $(\ov \eta\;C')$ and $R$  actually do match.

  Applying \(\ov \eta\) to \(C'\) is algorithmically simple: on the
  place graph, the operation recursively \emph{deletes} all sites that
  are not in the range of $\eta$, \emph{moves} all sites that have a
  unique image under $\eta$, or \emph{copies} all sites that do not
  have a unique image under $\eta$; on the link graph, it only
  administers links from and to ports (as proposed
  in~\cite{MILNER09}).  The computational essence of these operations
  is captured in terms of a few multi-set rewriting rules, that
  iterates over the place graph, which we discuss in more detail in
  Section~\ref{section:implementation}.

\subsection{Meta Theory}
We show that modelling ground and the parametric reduction rules as
rewrite rules is sound and complete.

  \begin{theorem} [Soundness]
    \label{th:brs_encoding:ground_soundness}
    Let \( B, B'\) be agents, \(\mathcal{R}\) a set of reaction rules
    and \((L, R, \eta) \in \mathcal{R}\).  If \(B\) can be rewritten
    into \(B'\) by \(L, R, \eta\) and \(W \mapsto Z\) is the
    corresponding rewriting system, then the following diagram
    commutes:
    
    \begin{center}
    \begin{tikzpicture}[normal line/.style={->},font=\scriptsize]
      \matrix (m) [matrix of math nodes, row sep=4em,
      column sep=12em]
      {
        B & B'\\
        \llb B \rrb & \llb B' \rrb\\  
      };
      \path[normal line]
      (m-1-1) edge node[auto] {\((L, R, \eta)\)} (m-1-2)
        edge node[auto,swap]  {\( \llb \cdot\rrb\)} (m-2-1)
      (m-1-2) edge node[auto]  {\(\llb \cdot \rrb\)} (m-2-2)
      (m-2-1) edge node[auto]  {\(W \mapsto Z\)} (m-2-2);
    \end{tikzpicture}
    \end{center}
  \end{theorem}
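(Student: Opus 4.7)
The strategy is to unfold the hypothesis that $B$ rewrites to $B'$ under $(L, R, \eta)$ into an explicit decomposition of $B$, apply Lemmas~\ref{lm:brs_encoding:composition} and~\ref{lm:brs_encoding:juxtaposition} to obtain concrete partitions of $\llb B \rrb$ and $\llb B' \rrb$, and then check that the rewrite step $W \mapsto Z$ carries the former partition precisely to the latter. I would split the proof into the ground case and the parametric case; in both cases the proof reduces to a piece-by-piece comparison, modulo the handling of $\ov \eta$.

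\emph{Ground case.} From $B \equiv C \circ L$, Lemma~\ref{lm:brs_encoding:composition} gives $\llb B \rrb = out_C \cup \widetilde{C} \cup eq_{CL} \cup \widetilde{L}$, and similarly $\llb B' \rrb = out_C \cup \widetilde{C} \cup eq_{CR} \cup \widetilde{R}$ from $B' \equiv C \circ R$. The rewriting rule consumes $\widetilde{L} \cup eq_{CL}$ and produces $\widetilde{R} \cup eq_{CR}$; because $L$ and $R$ share the same outer interface, the quantification-over-interface-components reformulation described just after the ground-rule display makes $eq_{CR}$ definable from $eq_{CL}$ by rebinding the same root/outer-name components into $R$. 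Since $out_C$ and $\widetilde{C}$ are untouched, the rewrite step on $\llb B \rrb$ delivers exactly $\llb B' \rrb$.

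\emph{Parametric case.} Here $B \equiv C \circ (D \otimes L) \circ C'$ and $B' \equiv C \circ (D \otimes R) \circ (\ov\eta \; C')$. Iterated applications of Lemmas~\ref{lm:brs_encoding:composition} and~\ref{lm:brs_encoding:juxtaposition} produce the partition of $\llb B \rrb$ already displayed in Section~\ref{section:encoding_parametric_brs}, and analogously for $\llb B' \rrb$. The rewrite rule replaces $\widetilde{L}$ by $\widetilde{R}$, rebinds the $eq$-blocks involving $L$'s interface to $R$'s, rebinds the $eq$-blocks involving $C'$'s outer interface to that of $\ov\eta\;C'$, and transforms $\widetilde{C'}$ into $\widetilde{\ov\eta\;C'}$. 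Once the $\ov\eta$ step is established, matching the rewritten multi-set with $\llb B' \rrb$ piece-by-piece is routine bookkeeping.

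\emph{Main obstacle.} The crux is the $\ov\eta$ step: one must show that the multi-set rewriting rules sketched in Section~\ref{section:implementation}, which traverse the place graph and delete, move, or copy the sub-bigraphs $d_i$ according to $\eta$, transform $\llb C' \rrb$ into $\llb \ov\eta \; C' \rrb$ (up to $\bumpeq$-equivalence, as required by Definition~\ref{def:ov_eta}). I would isolate this as an auxiliary lemma and prove it by induction on the place-graph structure of $C'$, case-splitting on whether each site lies outside the range of $\eta$, has a unique pre-image, or has multiple pre-images. The copying case is the most delicate: fresh node and edge names must be generated for each duplicate, links sharing a common outer name or edge have to be identified in the style of $\Vert$, and the resulting multi-set must still satisfy the uniqueness invariants of Lemma~\ref{lm:setrwr:confluent} so that Theorem~\ref{th:setrwr:bigraph} can be re-applied to read off the target bigraph.
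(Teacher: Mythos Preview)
Your proposal is correct and follows essentially the same approach as the paper: decompose $B$ and $B'$ via Lemma~\ref{lm:brs_encoding:composition} (and Lemma~\ref{lm:brs_encoding:juxtaposition} in the parametric case), read off the partitions of $\llb B\rrb$ and $\llb B'\rrb$, and observe that the rewrite rule swaps exactly the affected blocks. In fact you are considerably more thorough than the paper, which disposes of the parametric case with the single word ``Analogous'' and never isolates the $\ov\eta$ step as a separate obligation; your plan to treat that step as an auxiliary lemma, argued by induction on the place-graph structure of $C'$ with a case split on $|\eta^{-1}(i)|$, is a reasonable way to fill what the paper leaves implicit.
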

    
  \begin{proof}
    \emph{Ground reaction rule.} \(L\) and \(R\) are ground and the
    graph of \(\eta\) is empty.  \(B \equiv C \circ L\) and \(B'
    \equiv C \circ R\), by Lemma~\ref{lm:brs_encoding:composition} and
    because interfaces of \(L\) and \(R\) are the same, \(\llb B \rrb
    =\widetilde{C} \; \cup \; \widetilde{L} \; \cup \; out_C
    \; \cup \; in_{L} \; \cup \; eq_{CL}\) and \(\llb B' \rrb
    =\widetilde{C} \; \cup \; \widetilde{R} \; \cup \; out_C
    \; \cup \; in_{L} \; \cup \; eq_{CR}\), therefore \(\llb B' \rrb\)
    is the result of one step of the multi-rewriting system
    \(\widetilde{L} \; \cup \; eq_{CL} \to \widetilde{R} \;
    \cup \; eq_{CR}\) applied on \(\llb B \rrb\).  \emph{Parametric
      reaction rule.} Analogous.
  \end{proof}

  Conversely, all multi-set rewriting system that encode a ground
  reaction rule respects the semantics of ground \textsc{brs}.
  
  \begin{theorem}[Completeness]
    \label{th:brs_encoding:ground_completeness}
    Let \(X, Y\) be valid sets. Furthermore let \(W \mapsto Z\) be sets
    such that there exists an \((L, R, \eta)\) that is a reaction rule
    that can be applied on \(\llb X \rrb^{\star}\) and \(W \mapsto Z\) is
    the modeled reaction rule of \((L,R,\eta)\), then the following
    diagram commutes:
    
    \begin{center}
    \begin{tikzpicture}[normal line/.style={->},font=\scriptsize]
      \matrix (m) [matrix of math nodes, row sep=4em,
      column sep=7em]
      {
        X & Y\\
        \llb X \rrb^\star & \llb Y \rrb^\star\\  
      };
      \path[normal line]
      (m-1-1) edge node[auto] {\(W \mapsto Z\)} (m-1-2)
        edge node[auto,swap]  {\(\llb \cdot \rrb^\star\)} (m-2-1)
      (m-1-2) edge node[auto]  {\(\llb \cdot \rrb^\star\)} (m-2-2)
      (m-2-1) edge node[auto]  {\((L, R, \eta)\)} (m-2-2);
    \end{tikzpicture}
    \end{center}
  \end{theorem}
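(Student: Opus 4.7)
The plan is to invert the argument used in the Soundness proof (Theorem~\ref{th:brs_encoding:ground_soundness}). By hypothesis, \(W \mapsto Z\) is the rewriting rule modelling a reaction rule \((L, R, \eta)\), this step takes \(X\) to \(Y\), and the rule is applicable on the bigraph \(B_X := \llb X \rrb^\star\). We must show that applying \((L, R, \eta)\) to \(B_X\) yields a bigraph \(\bumpeq\)-equivalent to \(\llb Y \rrb^\star\).

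First, since \(X\) is valid, Theorem~\ref{th:setrwr:bigraph} ensures that \(\llb B_X \rrb = X\). The submultiset of \(X\) matched by \(W\) has, by the construction in Section~\ref{section:encoding_brs}, the form \(\widetilde{L} \cup eq_{CL}\) in the ground case, or the analogous shape involving \(\widetilde{L}\), \(\widetilde{D}\), \(\widetilde{C'}\), and the relevant equality sets in the parametric case. Reading this matched submultiset back through Lemma~\ref{lm:brs_encoding:composition} (together with Lemma~\ref{lm:brs_encoding:juxtaposition}) exhibits exactly the decomposition \(B_X \equiv C \circ L\) (respectively \(B_X \equiv C \circ (D \otimes L) \circ C'\)) that Definition~\ref{def:prr} requires in order to fire the rule.

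With this decomposition in hand, applying \((L, R, \eta)\) to \(B_X\) produces the bigraph \(B_Y = C \circ R\) (respectively \(B_Y = C \circ (D \otimes R) \circ (\ov \eta\; C')\)). Applying Lemma~\ref{lm:brs_encoding:composition} once more, now to \(B_Y\), shows that \(\llb B_Y \rrb\) is precisely the multi-set obtained from \(X\) by replacing the matched submultiset with the right-hand side of the modelled rule, that is, it equals \(Y\). Theorem~\ref{th:adequacy} then gives \(\llb Y \rrb^\star \bumpeq B_Y\), closing the diagram.

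The principal obstacle is the parametric case, and specifically the treatment of \(\ov \eta\). The multi-set rewriting sketched in Section~\ref{section:encoding_parametric_brs} realises \(\ov \eta\) iteratively, by deleting, moving, or copying sites across the place graph of \(C'\), whereas Definition~\ref{def:ov_eta} specifies \(\ov \eta\) abstractly. Establishing that the iterative rewriting agrees with the abstract definition requires an auxiliary induction over the place graph of \(C'\), with an invariant stating that each intermediate multi-set encodes a bigraph \(\bumpeq\)-equivalent to a partial application of \(\ov \eta\). The ground case, by contrast, is a mechanical reversal of Soundness.
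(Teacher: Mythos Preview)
Your proposal is correct and follows essentially the same route as the paper: read the matched submultiset back through Lemma~\ref{lm:brs_encoding:composition} to obtain the decomposition of \(\llb X\rrb^\star\), then invoke Theorem~\ref{th:adequacy} to identify \(\llb Y\rrb^\star\) with the result of firing the reaction rule. The paper's own proof is terser---it cites only Lemma~\ref{lm:brs_encoding:composition} and Theorem~\ref{th:adequacy} and dismisses the parametric case with ``Analogously''---whereas you explicitly bring in Theorem~\ref{th:setrwr:bigraph} and Lemma~\ref{lm:brs_encoding:juxtaposition} and, more importantly, flag that the parametric case needs an auxiliary induction over the place graph of \(C'\) to reconcile the iterative realisation of \(\ov\eta\) with Definition~\ref{def:ov_eta}; this is a genuine gap in the paper's presentation that you have surfaced rather than a divergence in strategy.
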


  \begin{proof}   \emph{Ground reaction rule.}
    Using Lemma~\ref{lm:brs_encoding:composition} and
    Theorem~\ref{th:adequacy}, \(\widetilde{L} \; \cup \;
    eq_{XL} \in X\) implies that \(\llb X \rrb^{\star} \equiv C \circ
    L\). Therefore \(\widetilde{R} \; \cup \; eq_{XR} \in X\)
    implies that \(\llb Y \rrb^{\star} \equiv C \circ R\).
    \emph{Parametric reaction rule.} Analogously.
  \end{proof}

\section{Implementation in {\celf}}
\label{section:implementation}

We turn now to the original motivation of this work and evaluate the
bigraph relational model empirically.  The very nature of the rules
depicted in Figure~\ref{fig:validity} suggests a language based on
multi-set rewriting, such as Maude, Elan, $\lambda$Prolog, CHR, or
Celf.  Because of Celf's features, in particular linearity and
higher-order abstract syntax, we have decided to use Celf as our
implementation platform.

And indeed, the implementation of the bigraph relational model is
straightforward.  Roots, nodes, sites, etc., are encoded using Celf's
intuitionistic features, and the evidence that something is a root, a
parent, or a port is captured by linear assumptions using dependent
types.  Consequently a bigraph is represented by the Celf context.
The multi-set rewriting rules as depicted in Figure~\ref{fig:validity}
and the reaction rules from Section~\ref{section:encoding_brs} are
encoded using linear types and the concurrency modality.  For example, the
rewrite rule 
\[\{\ttt{is\_root R, has\_child\_p (dst\_r R) z}\} 
    \uplus \Delta \mapsto \Delta
\]
is implemented in {\celf} as a constant 
\begin{verbatim}
          dr : is_root R B * has_child_p (dst_r R) z B -o {1}.
\end{verbatim}
where \verb!is_root! carries a reference to the bigraph it is a root
for, and all uppercase variables are implicitly $\Pi$ quantified.  Celf
provides a sophisticated type inference algorithm that infers all
omitted types (or terminates with an error if those cannot be
found).

Celf also comes with a forward directed logic programming engine in
the style of Lollimon~\cite{Lopez05ppdp}, which resembles the CHR
evaluation engine.  During operation, the uppercase variable names are
replaced by logic variables, which are subsequently instantiated by
unification if the rule is applied.  Note that the properties of the
encoded bigraph reactive system are preserved: If the reaction rules
are strongly normalising then so is their encoding.  As an
illustration of our experiments we depict an encoding of bigraph
validity (see Figure~\ref{fig:setrwr:valid}) as a type family
\ttt{valid} and a few Celf declarations in Figure~\ref{fig:clf:valid}.
Note how similar the two figures are. 

\begin{figure}[t]
\label{fig:clf:valid}
\begin{center}
\begin{small}
\begin{verbatim}
%% base cases
dr : is_root R B * has_child_p (dst_r R) z B -o {1}.
do : is_o_name O B * has_child_l (dst_o O) z B -o {1}.
de : is_e_name E B * has_child_l (dst_e E) z B -o {1}.

%% recursive cases
lgpsz : is_port P Bi * lp P A Bi * vp A (s z) Bi * link (src_p P) D Bi
  * has_child_l D (s N) Bi -o {has_child_l D N Bi}.
lgi : is_i_name I Bi * link (src_i I) D Bi
  * has_child_l D (s N) Bi -o {has_child_l D N Bi}.
pgs : is_site S Bi * prnt (src_s S) D Bi * has_child_p D (s N) Bi
  -o {has_child_p D N Bi }.
pgnz : is_node A Bi * has_child_p (dst_n A) z Bi * prnt (src_n A) D Bi
  * has_child_p D (s N) Bi * lc A K Bi -o arity K z -> {has_child_p D N Bi}.
pgns : is_node A Bi * has_child_p (dst_n A) z Bi * prnt (src_n A) D Bi
  * has_child_p D (s N) Bi * lc A K Bi -o arity K (s N')
    -> {has_child_p D N Bi * vp A (s N') Bi}.
lgps : is_port P Bi * lp P A Bi * vp A (s (s  N)) Bi
  * link (src_p P) D Bi * has_child_l D (s N') Bi
    -o {vp A (s  N) Bi * has_child_l D N' Bi}.
\end{verbatim}
\end{small}
\end{center}
\caption{The implementation of the valid relation in {\celf}.}
\end{figure}

The higher-order nature of \celf{} allows us to express rewriting
rules that dynamically introduce new rewriting rules on the fly.  In
Celf, rewriting rules are first-class citizens.  The logical principle
behind this technique is called \emph{embedded implications}.  By
nesting them we achieve elegant encodings.

An example is the encoding of a parametric bigraph reaction rule \((L,
R, \ov \eta)\). The definition of the Celf signature is rather
involved, where we use linearity and token system (\(\llb \emptyset,
\emptyset \rrb\)) in order to sequentialise the reaction rule.  Below
we give an algorithm that computes the Celf declaration
\[
\ttt{rule}_{(L, R, \ov \eta)}: \llb(m,\eta)\rrb \otimes \big ( \llb
(\emptyset, \emptyset) \rrb \multimap \{\widetilde{L} \otimes
eq_{XL} \multimap \{\widetilde{R} \otimes eq_{XR}) \}\}\big)
\] 
from the sites $m$ in $L$ and $\eta$.\footnote{In the interest of
  clarity, we omit all references to the bigraph identifiers from Celf
  type constructors.}  Recall the three auxiliary operations
\emph{delete}, \emph{move}, and \emph{copy} that are triggered
depending on the cardinality $c = |\{x \mid x \in m, (x,y) \in
\eta\}|$.  In the case that is $ c=0$, we first colour all the direct
children of node $prnt_L\;x$ (using \ttt{tmp}) that contain the site
$x$ with colour \ttt{tmp\_prnt}.  Second we remove the colour
information from all of siblings of $x$ that are also present in $L$.
Third, for each coloured node, we start a recursive descent phase
(using \ttt{del}) to trigger the deletion of the node and its
children.
\begin{align*}
  \label{fig:impl_eta:del}\llb x \in m&, (x,y) \not\in \eta \rrb = \\
   & \ttt{has\_child\_p }(prnt_L \; x) \; (\ttt s^k\; N) \otimes
  \ttt{tmp } (prnt_L \; x) \; (s^k \; N) \\
   & \quad \otimes (\ttt{tmp } (prnt_L \;
  x) \; \ttt z \multimap \{ \botimes{k}{i=0} \ttt{tmp\_prnt } S_i
  \; D_i \\
   & \qquad \multimap \{ \botimes{k}{i=0} \ttt{prnt } S_i \; D_i 
  \otimes \ttt{del }(prnt_L \; x) \;N \otimes (\ttt{del } (prnt_L \; x) \; \ttt z \\
   & \qquad \quad \multimap \{ \ttt{has\_child\_p } (prnt_L \; x) \; (\ttt
  s^k \ttt z) \otimes \llb m\setminus\{x\}, \eta \rrb\})\}\})
\end{align*}
In the case that $c=1$, we do something very similar as in the
previous case, except that we move instead of delete. This case is
conceptually easier because we can skip the recursive descent phase.
\begin{align*}
     \llb x \in m&,  (x,y)  \in \eta\rrb 
   = \\
      & \ttt{has\_child\_p } (prnt_R \; y) \; (\ttt s^k \; N)
     \otimes  \ttt{tmp\_move } (prnt_L \; x) \; (s^k \; N) \\
      & \quad \otimes (\ttt{tmp\_move }(prnt_L \; x) \; \ttt z 
     \multimap \{\ttt{move }(prnt_L \; x) \; (prnt_R \; x) \; N \\
      & \qquad \otimes  (\ttt{move } (prnt_L \; x) \; (prnt_R \; x) \; \ttt z\\
      & \qquad \multimap \{ \ttt{has\_child\_p } (prnt_R \; y) \; 
     (\ttt s^k \; \ttt z) \otimes \llb m\setminus \{x\}, \eta\setminus
     \{(x,y) \}\rrb\})\})
\end{align*}
The case that $c>1$ is again similar, except that this time we need to
recursively copy the graph rooted in $prnt_L\;x$.  While copying we
are forced to create new nodes, ports, etc., which we get for free
from the \ttt{Exists} connective that is part of Celf.
\begin{align*}
 \llb x \in m&, (x,y) \in \eta \rrb = \\
      & \ttt{has\_child\_p } (prnt_R \;
       x) \; (\ttt s^k \; N) \otimes \ttt{tmp\_copy } (prnt_L \; x) (\ttt
       s^k \; N) \\
        & \quad \otimes (\ttt{tmp\_copy } (prnt_L \; x) \; \ttt z \multimap \{ 
       \botimes{k}{i=0} \ttt{tmp\_prnt } S_i \; D_i\\
        & \qquad \multimap \{ \botimes{k}{i=0} \ttt{prnt } S_i \; D_i \otimes
       \ttt{copy }(prnt_L \; x) \; (prnt_R \; y) \; N \\
        & \qquad \quad \otimes (\ttt{copy }
       (prnt_L \; x) \; (prnt_R \; y) \; \ttt z \\
        & \qquad \qquad \multimap
       \{\ttt{has\_child\_p } (prnt_R \; x) \; (\ttt s^k \; N) 
       \otimes \llb m,\eta\setminus \{(x,y) \}\rrb\})\}\})\\
\end{align*}
In the base case, we define $\llb \emptyset, \emptyset \rrb$ as a Celf
type constructor. Note that some cases, in particular move and copy,
need additional rules when the source node and the destination node
are the same.  This definition is well-formed because during the
recursive calls either, $m$, $\eta$, or both get smaller.



Finally, we address the question of adequacy.  Let $(B, \mathcal{R})$
be a \brs, and $\Gamma$ the intuitionistic Celf context that contains
the names of all ports, sites, roots, inner names, outer names, edges,
nodes, controls, the graph of the arity function in $B$, and the
translation of all reaction rules declared in $\mathcal{R}$. 

\begin{theorem}[Adequacy]  
  The agent $B$ reduces to agent $B'$ using the rules in $\mathcal{R}$
  if and only if (in {\upshape Celf})
  $\Pi{\Gamma}.\botimes{}{} \llb B' \rrb \multimap \{C\}$
  implies that $\Pi{\Gamma}.\botimes{}{} \llb B \rrb \multimap \{C\}$.
\end{theorem}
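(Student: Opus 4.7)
The plan is to reduce the statement to the earlier soundness and completeness theorems (Theorems~\ref{th:brs_encoding:ground_soundness} and~\ref{th:brs_encoding:ground_completeness}) by exhibiting a tight correspondence between the abstract multi-set rewriting system of Section~\ref{section:encoding_brs} and its Celf encoding from Section~\ref{section:implementation}. The intuitionistic context $\Gamma$ provides all names, controls, and the arity graph as persistent data, and each reaction rule of $\mathcal{R}$ as a linear constant. A derivation of $\Pi\Gamma.\botimes{}{}\llb B \rrb \multimap \{C\}$ from $\Pi\Gamma.\botimes{}{}\llb B' \rrb \multimap \{C\}$ then corresponds, via the operational reading of Celf's forward-chaining engine, to a rewrite trace carrying $\llb B \rrb$ into $\llb B' \rrb$ using only those reaction-rule constants.

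For the ``only if'' direction, assume $B$ reduces to $B'$ via some $(L,R,\eta)\in\mathcal{R}$. Soundness (Theorem~\ref{th:brs_encoding:ground_soundness}) gives a rewriting step $\llb B \rrb \mapsto \llb B' \rrb$ for the modelled rule $W \mapsto Z$. Inspecting the algorithm that produces the Celf declaration $\ttt{rule}_{(L,R,\eta)}$, I would check that applying this constant to $\botimes{}{}\llb B \rrb$ in Celf yields $\botimes{}{}\llb B'\rrb$ as a monadic conclusion: in the ground case this is immediate, and in the parametric case it follows by running the cascade of embedded implications that implement $\ov\eta$ on $C'$. Composing with the hypothesised $\botimes{}{}\llb B'\rrb \multimap \{C\}$ by cut yields the required derivation of $\botimes{}{}\llb B \rrb \multimap \{C\}$; iterating handles multi-step reductions.

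For the ``if'' direction, I would specialise $C$ to a fresh propositional atom $c$ and inspect the resulting Celf derivation. Since $\Gamma$ contains only reaction-rule constants as linear constructors acting on the bigraph portion of the context, every step of the forward-chaining engine must be initiated by some $\ttt{rule}_{(L,R,\eta)}$. Reading off this sequence produces a rewrite trace $\llb B \rrb \longrightarrow \llb B' \rrb$ in the multi-set rewriting system, from which Completeness (Theorem~\ref{th:brs_encoding:ground_completeness}) together with the bigraph Adequacy theorem (Theorem~\ref{th:adequacy}) recovers a reduction from $B$ to $B'$ up to $\bumpeq$-equivalence.

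The main obstacle is the parametric case, where the Celf encoding of a single rule of $\mathcal{R}$ is not one linear implication but a nested cascade of embedded implications sequentialised by the \ttt{tmp}, \ttt{tmp\_prnt}, \ttt{tmp\_move}, \ttt{tmp\_copy}, \ttt{del}, \ttt{move}, and \ttt{copy} token families. The hard part will be proving a \emph{macro-step} lemma asserting that, under this linear-token discipline, every maximal sub-derivation initiated by $\ttt{rule}_{(L,R,\eta)}$ must run to completion without interference from other reaction-rule invocations, and that its net effect on the context is precisely to replace $\widetilde{L} \cup eq_{C((D\otimes L)\circ C')}$ by $\widetilde{R} \cup eq_{C((D\otimes R)\circ (\ov\eta\;C'))}$. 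This requires a linearity invariant tracking that each auxiliary token is produced and consumed exactly once along the intended recursive descent over the place graph, which is exactly what the token design of Section~\ref{section:implementation} is engineered to guarantee.
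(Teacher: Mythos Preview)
Your approach is essentially the same as the paper's: the paper's proof is a one-liner (``By induction on the reduction sequence, using the definition of $\llb(m,\eta)\rrb$ and Theorems~\ref{th:brs_encoding:ground_soundness},~\ref{th:brs_encoding:ground_completeness}''), and you have unpacked exactly that into the two directions, the cut argument, and the appeal to soundness/completeness. The macro-step lemma you single out as the hard part is precisely what the paper hides behind the phrase ``using the definition of $\llb(m,\eta)\rrb$'', so your proposal is in fact more honest about where the work lies.
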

  
\begin{proof} 
  By induction on the reduction sequence, using the definition of
  $\llb(m,\eta)\rrb$ and
  Theorems~\ref{th:brs_encoding:ground_soundness},
  ~\ref{th:brs_encoding:ground_completeness}.
\end{proof}

\section*{Conclusion}
\label{section:conclusion}

In this paper we have described a model for bigraph reactive systems,
which we refer to as the \emph{bigraph relational model}. We have
shown that this model that is based on a multi-set rewriting system is
amenable to implementation.  The rewriting system ensures the validity
of the encoding with respect to the bigraph structural properties, and
we have shown that the semantics of \textsc{brs} is precisely captured
by the multi-set rewriting rules. Finally, we give an implementation
of the bigraph relational model in Celf, which is powerful tool:
Linearity allows us to implement the reaction rules directly, its
higher-order features take care of dynamic introduction of new rewrite
rules and the creation of fresh names while copying the place graph
where warranted.

\bibliographystyle{eptcs}
\bibliography{maxime}














\end{document}

\[\ttt{rule: } \widetilde{L} \otimes eq_{XL} \multimap \llb (m, \eta) \rra \widetilde{R} \otimes eq_{XR}\]